\newtheorem{proposition}{Proposition}
\newtheorem{corollary}{Corollary}
\newtheorem{definition}{Definition}
\newtheorem{lemma}{Lemma}
\newcommand{\df}{\stackrel{\mbox{\scriptsize def}}{=}}
\newcommand{\ELS}{ELS}
\newcommand{\rk}{\mathrm{rk}}
\newcommand{\dr}{d_{\mbox{\tiny{R}}}}
\newcommand{\ds}{d_{\mbox{\tiny{S}}}}
\newcommand{\Ar}{A_{\mbox{\tiny{R}}}}
\newcommand{\As}{A_{\mbox{\tiny{S}}}}
\newcommand{\ar}{a_{\mbox{\tiny{R}}}}
\newcommand{\as}{a_{\mbox{\tiny{S}}}}
\newcommand{\deltar}{\delta_{\mbox{\tiny{R}}}}
\newcommand{\deltas}{\delta_{\mbox{\tiny{S}}}}
\begin{document}
\title{Constant-Rank Codes}
\author{Maximilien Gadouleau and Zhiyuan Yan\\
Department of Electrical and Computer Engineering \\
Lehigh University, PA 18015, USA\\ E-mails: \{magc,
yan\}@lehigh.edu} \maketitle

\thispagestyle{empty}

\begin{abstract}
Constant-dimension codes have recently received attention due to
their significance to error control in noncoherent random network
coding. In this paper, we show that constant-rank codes are closely
related to constant-dimension codes and we study the properties of
constant-rank codes. We first introduce a relation between vectors
in $\mathrm{GF}(q^m)^n$ and subspaces of $\mathrm{GF}(q)^m$ or
$\mathrm{GF}(q)^n$, and use it to establish a relation between
constant-rank codes and constant-dimension codes. We then derive
bounds on the maximum cardinality of constant-rank codes with given
rank weight and minimum rank distance. Finally, we investigate the
asymptotic behavior of the maximal cardinality of constant-rank
codes with given rank weight and minimum rank distance.
\end{abstract}
\IEEEpeerreviewmaketitle

\section{Introduction}\label{sec:introduction}
While random network coding \cite{HO_IT06} has proved to be a
powerful tool for disseminating information in networks, it is
highly susceptible to errors. Thus, error control for random network
coding is critical and has received growing attention recently.
Error control schemes proposed for random network coding assume two
types of transmission models: some (see, e.g., \cite{yeung_cis06,
cai_cis06}) depend on the underlying network topology or the
particular linear network coding operations performed at various
network nodes; others \cite{koetter_arxiv07, silva_arxiv07} assume
that the transmitter and receiver have no knowledge of such channel
transfer characteristics. The contrast is similar to that between
coherent and noncoherent communication systems.

Error control for noncoherent random network coding is first
considered in \cite{koetter_arxiv07}. Motivated by the property that
random network coding is vector-space preserving,
\cite{koetter_arxiv07} defines an operator channel that captures the
essence of the noncoherent transmission model. Hence, codes defined
in finite field Grassmannians \cite{chihara_siam87}, referred to as
constant-dimension codes, play a significant role in error control
for noncoherent random network coding.
%The standard
%advocated approach to random network coding (see, e.g.,
%\cite{CHOU_ALLERTON03}) involves transmission of packet headers used
%to record the particular linear combination of the components of the
%message present in each received packet. From coding theoretic
%perspective, the set of subspaces generated by the standard approach
%may be viewed as a \textbf{suboptimal} constant-dimension code with
%minimum dimension distance $2$ on the Grassmannian, because the
%Grassmannian contains more spaces with minimum dimension distance
%$2$ than those obtained by the standard approach
%\cite{koetter_arxiv07}. Hence, studying random network coding from
%coding theoretic perspective results in better error control
%schemes.
In \cite{koetter_arxiv07}, a Singleton bound for constant-dimension
codes and a family of codes that are nearly Singleton-bound
achieving are proposed.
% (we will refer to these codes as KK codes henceforth).
Despite the asymptotic optimality of the Singleton bound and the
codes designed in \cite{koetter_arxiv07}, the maximal cardinality of
a constant-dimension code with finite dimension and minimum distance
remains unknown, and it is not clear how an optimal code that
achieves the maximal cardinality can be constructed. It is difficult
to answer the above questions based on constant-dimension codes
directly since the set of all subspaces of the ambient space lacks a
natural group structure \cite{silva_arxiv07}.

The class of nearly Singleton bound achieving constant-dimension
codes in \cite{koetter_arxiv07} are related to rank metric codes.
The relevance of rank metric codes to noncoherent random network
coding is further established in \cite{silva_arxiv07}. In addition
to network coding, rank metric codes \cite{delsarte_jct78,
gabidulin_pit0185, roth_it91} have been receiving steady attention
in the literature due to their applications in storage systems
\cite{roth_it91}, public-key cryptosystems \cite{gabidulin_lncs91},
and space-time coding \cite{lusina_it03}. The pioneering works in
\cite{delsarte_jct78, gabidulin_pit0185, roth_it91} have established
many important properties of rank metric codes. Independently in
\cite{delsarte_jct78, gabidulin_pit0185, roth_it91}, a Singleton
bound (up to some variations) on the minimum rank distance of codes
was established, and a class of codes that achieve the bound with
equality was constructed. We refer to codes that attain the
Singleton bound as maximum rank distance (MRD) codes, and the class
of MRD codes proposed in \cite{gabidulin_pit0185} as Gabidulin codes
henceforth.

%In \cite{delsarte_jct78, gabidulin_pit0185}, analytical expressions
%to compute the weight distribution of linear MRD codes were also
%derived. In \cite{gabidulin_pit0285, roth_it91}, it was shown that
%Gabidulin codes are also optimal in the sense of a Singleton bound
%in crisscross weight, a metric considered in \cite{roth_it91,
%roth_it97} for crisscross errors, which occur in storage devices.
%Following the works in \cite{delsarte_jct78, gabidulin_pit0185,
%roth_it91}, the construction in \cite{gabidulin_pit0185} was
%extended in \cite{kshevetskiy_isit05} and the properties of subspace
%subcodes and subfield subcodes were considered in
%\cite{gabidulin_isit05}; the error performance of Gabidulin codes
%was investigated in \cite{roth_it97, gadouleau_it06}.

%Related to rank metric codes, constant-dimension codes were proposed
%for error and erasure correction in random network coding
%\cite{koetter_arxiv07, silva_arxiv07}. A family of asymptotically
%optimal constant-dimension codes was introduced based on Gabidulin
%codes \cite{koetter_arxiv07}.

%Constant-dimension codes are closely related to constant-rank codes.

In this paper, we investigate the properties of constant-rank codes,
which are the counterparts in rank metric codes of constant
(Hamming) weight codes \cite{agrell_it00}.
%Constant-weight codes have received a lot of attention
%\cite{johnson_it62, brouwer_it90, agrell_it00} due to their
%importance to the design of spherical codes and DC-free codes, and
%their possible applications to code-division multiple access
%systems, protocol design for the collision channel without feedback,
%automatic-repeat-request error-control systems, and parallel
%asynchronous communication.
%They are also instrumental in the design
%of spherical codes and DC-free codes. For an extended list of
%applications of constant Hamming weight codes, see
%\cite{agrell_it00} and references thereof.
We first introduce a relation between vectors in
$\mathrm{GF}(q^m)^n$ and subspaces of $\mathrm{GF}(q)^m$ or
$\mathrm{GF}(q)^n$, and use it to establish a relation between
constant-rank codes and constant-dimension codes. We also derive a
lower bound on the maximum cardinality of constant-rank codes which
depends on the maximum cardinality of constant-dimension codes. We
then derive bounds on the maximum cardinality of constant-rank codes
with given rank and minimum rank distance. Finally, we characterize
the asymptotic behavior of the maximal cardinality of constant-rank
codes with given rank and minimum rank distance, and compare it with
asymptotic behavior of the maximal cardinality of constant-dimension
codes.

The rest of the paper is organized as follows.
Section~\ref{sec:preliminaries} briefly reviews some important
concepts in order to keep this paper self-contained. In
Section~\ref{sec:prel_results}, we establish a relation between
constant-dimension and constant-rank codes. In
Section~\ref{sec:bounds}, we derive bounds on the maximum
cardinality of constant-rank codes with a given minimum rank
distance. Finally, Section~\ref{sec:asymptotics} investigates the
asymptotic behavior of the maximum cardinality of constant-rank
codes.

\section{Preliminaries}\label{sec:preliminaries}
\subsection{Rank metric codes and elementary linear subspaces}\label{sec:rank_metric}
%Error correction codes with the rank metric \cite{delsarte_jct78,
%gabidulin_pit0185, roth_it91} have been receiving steady attention
%in the literature due to their applications in storage systems
%\cite{roth_it91}, public-key cryptosystems \cite{gabidulin_lncs91},
%space-time coding \cite{lusina_it03}, and network coding
%\cite{koetter_arxiv07, silva_arxiv07}. The pioneering works in
%\cite{delsarte_jct78, gabidulin_pit0185, roth_it91} have established
%many important properties of rank metric codes. Independently in
%\cite{delsarte_jct78, gabidulin_pit0185, roth_it91}, a Singleton
%bound (up to some variations) on the minimum rank distance of codes
%was established, and a class of codes that achieve the bound with
%equality was constructed. Following the works in
%\cite{delsarte_jct78, gabidulin_pit0185, roth_it91}, the
%construction in \cite{gabidulin_pit0185} was extended in
%\cite{kshevetskiy_isit05}.

Consider a vector ${\bf x}$ of length $n$ over $\mathrm{GF}(q^m)$.
The field $\mathrm{GF}(q^m)$ may be viewed as an $m$-dimensional
vector space over $\mathrm{GF}(q)$. The rank weight of ${\bf x}$,
denoted as $\rk({\bf x})$, is defined to be the \emph{maximum}
number of coordinates of ${\bf x}$ that are linearly independent
over $\mathrm{GF}(q)$ \cite{gabidulin_pit0185}. For any basis $B_m$
of $\mathrm{GF}(q^m)$ over $\mathrm{GF}(q)$, each coordinate of
${\bf x}$ can be expanded to an $m$-dimensional column vector over
$\mathrm{GF}(q)$ with respect to $B_m$. The rank weight of ${\bf x}$
is hence the rank of the $m\times n$ matrix over $\mathrm{GF}(q)$
obtained by expanding all the coordinates of ${\bf x}$.

For all ${\bf x}, {\bf y}\in \mathrm{GF}(q^m)^n$, it is easily
verified that $\dr({\bf x},{\bf y})\df \rk({\bf x} - {\bf y})$ is a
metric over GF$(q^m)^n$, referred to as the \emph{rank metric}
henceforth \cite{gabidulin_pit0185}. The {\em minimum rank distance}
of a code $C$, denoted as $\dr$, is simply the minimum rank distance
over all possible pairs of distinct codewords.

It is shown in \cite{delsarte_jct78, gabidulin_pit0185, roth_it91}
that the minimum rank distance of a block code of length $n$ and
cardinality $M$ over $\mathrm{GF}(q^m)$ satisfies $\dr \leq
n-\log_{q^m}M+1.$ In this paper, we refer to this bound as the
Singleton bound for rank metric codes and codes that attain the
equality as maximum rank distance (MRD) codes. We refer to the
subclass of linear MRD codes introduced independently in
\cite{delsarte_jct78, gabidulin_pit0185, roth_it91} as Gabidulin
codes.

We denote the number of vectors of rank $r$ ($0 \leq r \leq
\min\{m,n\}$) in $\mathrm{GF}(q^m)^n$ as $N_r(q^m,n) = {n \brack r}
\alpha(m,r)$ \cite{gabidulin_pit0185}, where $\alpha(m,0) \df 1$ and
$\alpha(m,r) \df \prod_{i=0}^{r-1}(q^m-q^i)$ for $r \geq 1$. The ${n
\brack r}$ term is often referred to as a Gaussian
polynomial~\cite{andrews_book76}, defined as ${n \brack r} \df
\alpha(n,r)/\alpha(r,r)$. The volume of a ball with rank radius $r$
in $\mathrm{GF}(q^m)^n$ is denoted as $V_r(q^m,n) = \sum_{i=0}^r
N_i(q^m,n)$.
%We denote the intersection of two \textbf{spheres} in
%$\mathrm{GF}(q^m)^n$ of radii $r$ and $s$, with distance between
%their centers $d$ as $J(q^m,n,r,s,d)$. We will omit the dependence
%in $q^m$ and $n$ when there is no ambiguity about the vector space
%considered. A recursive way to compute $J(r,s,d)$ is given in
%\cite{brouwer_book89}. By \cite[Lemma 4.1.7]{brouwer_book89},
%$J(r,s,d)$ can be computed using the following initial conditions:
%$J(0,s,d) = \delta_{d,s}$, $J(r,0,d) = \delta_{d,r}$, $J(1,d-1,d) =
%q^{d-1}{j \brack 1}$, $J(1,d,d) = {d \brack 1} \left( q^m + q^n -
%q^d -q^{d-1} - 1 \right)$, $J(1,d+1,d) = q^{2d} N_1(q^{m-1}, n-1)$,
%$J(1,s,d) = 0$ otherwise (see \cite{gadouleau_it07_covering}); and
%the following recurrence relation
%\begin{equation}
%    c_{r+1} J(r+1,s,d) = b_{s-1} J(r,s-1,d) + (a_s - a_r) J(r,s,d) +
%    c_{s+1} J(r,s+1,d) - b_{r-1} J(r-1,s,d).
%\end{equation}
For all $q$, $1 \leq d \leq r \leq n \leq m$, the number of
codewords of rank $r$ in an $(n, n-d+1, d)$ linear MRD code over
$\mathrm{GF}(q^m)$ is given by \cite{gabidulin_pit0185}
\begin{equation}
    \label{eq:Mdr_def}
    M_{d,r} \df {n \brack r} \sum_{j=d}^r (-1)^{r-j}
    {r \brack j} q^{{r-j \choose 2}} \left( q^{m(j-d+1)} - 1\right).
\end{equation}

An {\em elementary linear subspace} (ELS) \cite{gadouleau_it06} is
defined to be a linear subspace $\mathcal{V} \subseteq
\mathrm{GF}(q^m)^n$ for which there exists a basis of vectors in
$\mathrm{GF}(q)^n$. We denote the set of all \ELS{}'s of
$\mathrm{GF}(q^m)^n$ with dimension $v$ as $E_v(q^m,n)$. It can be
easily shown that $|E_v(q^m,n)| = {n \brack v}$ for all $m$. An ELS
has properties similar to those for a set of coordinates
\cite{gadouleau_it06}. In particular, any vector belonging to an
\ELS{} with dimension $r$ has rank no more than $r$; conversely, any
vector ${\bf x} \in \mathrm{GF}(q^m)^n$ with rank $r$ belongs to a
unique \ELS{} in $E_r(q^m,n)$.
%For any $\mathcal{V} \in E_v(q^m,n)$, there exists
%$\bar{\mathcal{V}} \in E_{n-v}(q^m,n)$ such that $\mathcal{V} \oplus
%\bar{\mathcal{V}} = \mathrm{GF}(q^m)^n$, where $\oplus$ denotes the
%direct sum of two subspaces.

\subsection{Constant-dimension codes}

A {\em constant-dimension code} \cite{koetter_arxiv07} of length $n$
and constant-dimension $r$ over $\mathrm{GF}(q)$ is defined to be a
nonempty subset of $E_r(q,n)$. For all $\mathcal{U}, \mathcal{V} \in
E_r(q,n)$, it is easily verified that
\begin{equation}\label{eq:ds}
    \ds(\mathcal{U}, \mathcal{V}) \df \dim(\mathcal{U} + \mathcal{V}) -
    \dim(\mathcal{U} \cap \mathcal{V}) = 2\dim(\mathcal{U} +
    \mathcal{V}) - 2r
\end{equation}
is a metric over $E_r(q,n)$, referred to as the {\em subspace
metric} henceforth \cite{koetter_arxiv07}. The subspace distance
between $\mathcal{U}$ and $\mathcal{V}$ thus satisfies
$\ds(\mathcal{U}, \mathcal{V}) = 2\rk({\bf X}^T \,|\, {\bf Y}^T) -
2r$, where ${\bf X}$ and ${\bf Y}$ are generator matrices of
$\mathcal{U}$ and $\mathcal{V}$, respectively.

The {\em minimum subspace distance} of a constant-dimension code
$\Omega \subseteq E_r(q,n)$, denoted as $\ds$, is the minimum
subspace distance over all possible pairs of distinct subspaces. We
say $\Omega$ is an $(n,\ds,r)$ constant-dimension code over
$\mathrm{GF}(q)$ and we denote the maximum cardinality of an
$(n,2d,r)$ constant-dimension code over $\mathrm{GF}(q)$ as
$\As(q,n,2d,r)$. Since $\As(q,n,2d,r) = \As(q,n,2d,n-r)$
\cite{xia_arxiv07}, only the case where $2r \leq n$ needs to be
considered. Also, since $\As(q,n,2,r) = {n \brack r}$ and
$\As(q,n,2d,r) = 1$ for $d>r$, we shall assume $2 \leq d \leq r$
henceforth. Upper and lower bounds on $\As(q,n,2d,r)$ were derived
in \cite{wang_it03, koetter_arxiv07, xia_arxiv07}. In particular,
for all $q$, $2r \leq n$, and $2 \leq d \leq r$,
\begin{equation}\label{eq:bounds_As}
    q^{(n-r)(r-d+1)} \leq \As(q,n,2d,r) \leq
    \frac{\alpha(n,r-d+1)}{\alpha(r,r-d+1)}.
\end{equation}

\subsection{Preliminary graph-theoretic results}\label{sec:graph_theory}

We review some results in graph theory given in
\cite{el_rouayheb_isit07}. Two adjacent vertices $u,v$ in a graph
are denoted as $u \sim v$.

\begin{definition}\label{def:homo}
Let $G$ and $H$ be two graphs. A mapping $f$ from $V(G)$ to $V(H)$
is a homomorphism if for all $u,v \in V(G)$, $u \sim v \Rightarrow
f(u) \sim f(v)$.
\end{definition}

\begin{definition}\label{def:auto}
Let $G$ be a graph and $\phi$ a bijection from $V(G)$ to itself.
$\phi$ is called an automorphism of $G$ if for all $u,v \in V(G)$,
$u \sim v \Leftrightarrow \phi(u) \sim \phi(v)$.
\end{definition}

\begin{definition}\label{def:vertex_transitive}
We say that the graph $G$ is vertex transitive if  for all $u,v \in
V(G)$, there exists an automorphism $\phi$ of $G$ such that $\phi(u)
= v$.
\end{definition}

An {\em independent set} of a graph $G$ is a subset of $V(G)$ with
no adjacent vertices. The independence number $\alpha(G)$ of $G$ is
the maximum cardinality of an independent set of $G$. If $H$ is a
vertex transitive graph and if there is a homomorphism from $G$ to
$H$, then \cite{el_rouayheb_isit07}
\begin{equation}\label{eq:alpha_G_H}
    \alpha(G) \geq \alpha(H) \frac{|G|}{|H|}.
\end{equation}

\section{Constant-Rank and Constant-Dimension
Codes}\label{sec:prel_results}

\subsection{Definitions and technical results}

\begin{definition}\label{def:constant-rank}%[Constant-rank code]
A constant-rank code of length $n$ and constant-rank $r$ over
$\mathrm{GF}(q^m)$ is a nonempty subset of $\mathrm{GF}(q^m)^n$ such
that all elements have rank weight $r$.
\end{definition}

We denote a constant-rank code with length $n$, minimum rank
distance $d$, and constant-rank $r$ as an $(n,d,r)$ constant-rank
code over $\mathrm{GF}(q^m)$. We define the term $\Ar(q^m,n,d,r)$ to
be the maximum cardinality of an $(n,d,r)$ constant-rank code over
$\mathrm{GF}(q^m)$. If $C$ is an $(n,d,r)$ constant-rank code over
$\mathrm{GF}(q^m)$, then the code obtained by transposing all the
expansion matrices of codewords in $C$ forms an $(m,d,r)$
constant-rank code over $\mathrm{GF}(q^n)$ with the same
cardinality. Therefore $\Ar(q^m,n,d,r) = \Ar(q^n,m,d,r)$, and
henceforth we assume $n \leq m$ without loss of generality.

We now define two families of graphs which are instrumental in our
analysis of constant-rank codes.

\begin{definition}\label{def:R_q}%[Bilinear forms graph]
The {\em bilinear forms graph} $R_q(m,n,d)$ has as vertices all the
vectors in $\mathrm{GF}(q^m)^n$ and two vertices ${\bf x}$ and ${\bf
y}$ are adjacent if and only if $\dr({\bf x}, {\bf y}) < d$. The
{\em constant-rank graph} $K_q(m,n,d,r)$ is the subgraph of
$R_q(m,n,d)$ induced by the vectors in $\mathrm{GF}(q^m)^n$ with
rank $r$.
\end{definition}

The orders of the bilinear forms and constant-rank graphs are thus
given by $|R_q(m,n,d)| = q^{mn}$ and $|K_q(m,n,d,r)| = N_r(q^m,n)$.
An independent set of $R_q(m,n,d)$ corresponds to a code with
minimum rank distance $\geq d$. Due to the existence of MRD codes
for all parameter values, we have $\alpha(R_q(m,n,d)) =
q^{m(n-d+1)}$. Similarly, an independent set of $K_q(m,n,d,r)$
corresponds to a constant-rank code with minimum rank distance $\geq
d$, and hence $\alpha(K_q(m,n,d,r)) = \Ar(q^m,n,d,r)$.

\begin{lemma}\label{lemma:vertex_transitive}
The bilinear forms graph $R_q(m,n,d)$ is vertex transitive for all
$q$, $m$, $n$, and $d$. The constant-rank graph $K_q(m,m,d,m)$ is
vertex transitive for all $q$, $m$, and $d$.
\end{lemma}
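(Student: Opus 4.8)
The plan is to exhibit, in each case, an explicit group of rank-isometries of $\mathrm{GF}(q^m)^n$ acting transitively on the relevant vertex set; since every rank-isometry preserves the relation $\dr(\cdot,\cdot) < d$, it is in particular an automorphism of the graph in the sense of Definition~\ref{def:auto}, so transitivity of the isometry group immediately yields vertex transitivity in the sense of Definition~\ref{def:vertex_transitive}.

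For $R_q(m,n,d)$ I would use translations. Given any two vertices ${\bf x},{\bf y}\in\mathrm{GF}(q^m)^n$, define $\phi({\bf z})\df{\bf z}+({\bf y}-{\bf x})$. This is a bijection of $\mathrm{GF}(q^m)^n$, and for all ${\bf u},{\bf v}$ we have $\phi({\bf u})-\phi({\bf v})={\bf u}-{\bf v}$, hence $\dr(\phi({\bf u}),\phi({\bf v}))=\dr({\bf u},{\bf v})$; thus $\phi$ is an automorphism of $R_q(m,n,d)$ with $\phi({\bf x})={\bf y}$. This settles the first claim.

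For $K_q(m,m,d,m)$ translations no longer help, since they do not preserve rank; instead I would let $\mathrm{GL}_m(\mathrm{GF}(q))$ act. Fix a basis of $\mathrm{GF}(q^m)$ over $\mathrm{GF}(q)$ and identify each rank-$m$ vector ${\bf z}\in\mathrm{GF}(q^m)^m$ with its $m\times m$ expansion matrix $Z$ over $\mathrm{GF}(q)$, which is now invertible. For $P\in\mathrm{GL}_m(\mathrm{GF}(q))$, left multiplication by $P$ on expansion columns is a $\mathrm{GF}(q)$-linear bijection $\psi_P$ of $\mathrm{GF}(q^m)$; applying $\psi_P$ coordinatewise defines a bijection $\phi_P$ of $\mathrm{GF}(q^m)^m$ whose effect on expansion matrices is $Z\mapsto PZ$. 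Since $\psi_P$ is $\mathrm{GF}(q)$-linear, $\phi_P({\bf u})-\phi_P({\bf v})=\phi_P({\bf u}-{\bf v})$, and since $P$ is invertible, $\rk(\phi_P({\bf w}))=\rk(PW)=\rk(W)=\rk({\bf w})$; hence $\phi_P$ preserves rank (so it maps rank-$m$ vectors to rank-$m$ vectors, i.e.\ it permutes the vertex set of $K_q(m,m,d,m)$) and preserves rank distance, so it is an automorphism of $K_q(m,m,d,m)$. Finally, given rank-$m$ vertices ${\bf x},{\bf y}$ with invertible expansion matrices $X,Y$, taking $P=YX^{-1}\in\mathrm{GL}_m(\mathrm{GF}(q))$ gives $\phi_P({\bf x})={\bf y}$, which proves vertex transitivity.

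The one point to watch is where invertibility of the expansion matrices is used: the construction for $K_q$ requires $X^{-1}$ to exist, which forces both $n=m$ (square expansion matrices) and constant-rank $r=m$ (full-rank expansion matrices). For $r<\min\{m,n\}$ the same idea only produces homomorphisms, not automorphisms, which is why the lemma is stated only in the square full-rank case and why the homomorphism bound \eqref{eq:alpha_G_H} of Section~\ref{sec:graph_theory} is needed in the general case. I expect the write-up to be short; the only real bookkeeping is checking that $\phi_P$ genuinely restricts to a permutation of $V(K_q(m,m,d,m))$, which is exactly the rank-preservation statement above.
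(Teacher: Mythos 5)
Your proposal is correct and matches the paper's argument: translations handle $R_q(m,n,d)$, and for $K_q(m,m,d,m)$ the paper likewise acts on expansion matrices by an invertible matrix chosen from the two vertices (it uses right multiplication ${\bf X}\mapsto {\bf X}{\bf U}^{-1}{\bf V}$ where you use left multiplication by $YX^{-1}$, a purely cosmetic difference). Your extra check that the map permutes the rank-$m$ vertices, and your remark on why the argument needs $n=m$ and $r=m$, are consistent with the paper's treatment.
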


\begin{proof}
Let ${\bf u}, {\bf v} \in \mathrm{GF}(q^m)^n$. For all ${\bf x} \in
\mathrm{GF}(q^m)^n$, define $\phi({\bf x}) = {\bf x} + {\bf v} -
{\bf u}$. It is easily shown that $\phi$ is a graph automorphism of
$R_q(m,n,d)$ satisfying $\phi({\bf u}) = {\bf v}$. By
Definition~\ref{def:vertex_transitive}, $R_q(m,n,d)$ is hence vertex
transitive.

Let ${\bf u}, {\bf v} \in \mathrm{GF}(q^m)^m$ have rank $m$, and
denote their expansions with respect to a basis $B_m$ of
$\mathrm{GF}(q^m)$ over $\mathrm{GF}(q)$ as ${\bf U}$ and ${\bf V}$,
respectively. For all ${\bf x} \in \mathrm{GF}(q^m)^m$ with rank
$m$, define $\phi({\bf x}) = {\bf y}$ such that ${\bf Y} = {\bf
X}{\bf U}^{-1} {\bf V}$, where ${\bf X}, {\bf Y}$ are the expansions
of ${\bf x}$ and ${\bf y}$ with respect to $B_m$, respectively. We
have $\phi({\bf u}) = {\bf v}$, $\rk(\phi({\bf x})) = m$, and for
all ${\bf x}, {\bf z} \in \mathrm{GF}(q^m)^m$, $\dr(\phi({\bf x}),
\phi({\bf z})) = \rk({\bf X}{\bf U}^{-1} {\bf V} - {\bf Z}{\bf
U}^{-1} {\bf V}) = \rk({\bf X} - {\bf Z}) = \dr({\bf x}, {\bf z})$.
By Definition~\ref{def:auto}, $\phi$ is an automorphism which takes
${\bf u}$ to ${\bf v}$ and hence $K_q(m,m,d,m)$ is vertex
transitive.
\end{proof}

It is worth noting that $K_q(m,n,d,r)$ is not vertex transitive in
general.

\subsection{Constant-dimension and constant-rank
codes}\label{sec:dimension_v_rank}

In \cite{koetter_arxiv07}, constant-dimension codes were constructed
from rank distance codes as follows. Let $C$ be a code with length
$n$ over $\mathrm{GF}(q^m)$. For any ${\bf c} \in C$, consider its
expansion ${\bf C}$ with respect to the basis $B_m$ of
$\mathrm{GF}(q^m)$ over $\mathrm{GF}(q)$, and construct $I({\bf C})
= ({\bf I}_m \,|\, {\bf C}) \in \mathrm{GF}(q)^{m \times m+n}$. Then
$I(C) \df \{ I({\bf C}) | {\bf c} \in C \}$ is a constant-dimension
code in $E_m(q,m+n)$. This relation between rank codes and
constant-dimension codes was also commented in graph-theoretic terms
in \cite{brouwer_book89}.

We introduce a relation between vectors in $\mathrm{GF}(q^m)^n$ and
subspaces of $\mathrm{GF}(q)^m$ or $\mathrm{GF}(q)^n$. For any ${\bf
x} \in \mathrm{GF}(q^m)^n$ with rank $r$, consider the matrix ${\bf
X} \in \mathrm{GF}(q)^{m \times n}$ obtained by expanding all the
coordinates of ${\bf x}$ with respect to a basis $B_m$ of
$\mathrm{GF}(q^m)$ over $\mathrm{GF}(q)$. The column span of ${\bf
X}$, denoted as $\mathfrak{S}({\bf x})$, is an $r$-dimensional
subspace of $\mathrm{GF}(q)^m$, which corresponds to the subspace of
$\mathrm{GF}(q^m)$ spanned by the coordinates of ${\bf x}$. The row
span of ${\bf X}$, denoted as $\mathfrak{T}({\bf x})$, is an
$r$-dimensional subspace of $\mathrm{GF}(q)^n$, which corresponds to
the unique ELS $\mathcal{V} \in E_r(q^m,n)$ such that ${\bf x} \in
\mathcal{V}$.

\begin{lemma}\label{lemma:S_T}
For all $\mathcal{S} \in E_r(q,m)$ and $\mathcal{T} \in E_r(q,n)$,
there exists ${\bf x} \in \mathrm{GF}(q^m)^n$ with rank $r$ such
that $\mathfrak{S}({\bf x}) = \mathcal{S}$ and $\mathfrak{T}({\bf
x}) = \mathcal{T}$.
\end{lemma}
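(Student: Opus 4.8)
The plan is to construct the desired vector explicitly from bases of $\mathcal{S}$ and $\mathcal{T}$. First I would fix a basis $\{{\bf s}_1,\ldots,{\bf s}_r\}$ of $\mathcal{S}$ and arrange these vectors as the columns of a matrix ${\bf S}\in\mathrm{GF}(q)^{m\times r}$, so that ${\bf S}$ has rank $r$ and column span $\mathcal{S}$. Similarly, I would fix a basis $\{{\bf t}_1,\ldots,{\bf t}_r\}$ of $\mathcal{T}$ and arrange these vectors as the rows of a matrix ${\bf T}\in\mathrm{GF}(q)^{r\times n}$, so that ${\bf T}$ has rank $r$ and row span $\mathcal{T}$. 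I would then set ${\bf X}={\bf S}{\bf T}\in\mathrm{GF}(q)^{m\times n}$ and let ${\bf x}\in\mathrm{GF}(q^m)^n$ be the vector whose expansion with respect to the basis $B_m$ is ${\bf X}$.

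The remaining work is to check that ${\bf X}$ has the claimed rank and spans. Since ${\bf S}$ has full column rank, the map ${\bf v}\mapsto{\bf S}{\bf v}$ is injective on $\mathrm{GF}(q)^r$; since ${\bf T}$ has full row rank, its image is all of $\mathrm{GF}(q)^r$; hence $\rk({\bf X})=\rk({\bf S}{\bf T})=r$. Each column of ${\bf X}$ is a $\mathrm{GF}(q)$-linear combination of the columns of ${\bf S}$, so $\mathfrak{S}({\bf x})\subseteq\mathcal{S}$; as $\dim\mathfrak{S}({\bf x})=\rk({\bf X})=r=\dim\mathcal{S}$, equality holds. The argument for the row span is symmetric: each row of ${\bf X}$ is a linear combination of the rows of ${\bf T}$, so $\mathfrak{T}({\bf x})\subseteq\mathcal{T}$, and equality again follows by a dimension count. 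Finally $\rk({\bf x})=\rk({\bf X})=r$ by the definition of rank weight, so ${\bf x}$ has all the required properties.

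Since every step is elementary linear algebra, I do not anticipate a genuine obstacle; the only point requiring a moment's care is the rank identity $\rk({\bf S}{\bf T})=r$, which follows from ${\bf S}$ having full column rank and ${\bf T}$ having full row rank (equivalently, from Sylvester's rank inequality together with the trivial bound $\rk({\bf S}{\bf T})\le\min\{\rk{\bf S},\rk{\bf T}\}$). It may also be worth remarking that $E_r(q,m)$ and $E_r(q,n)$ are simply the sets of all $r$-dimensional subspaces of $\mathrm{GF}(q)^m$ and $\mathrm{GF}(q)^n$, since over $\mathrm{GF}(q)$ every subspace is trivially elementary, so no restriction on $\mathcal{S}$ or $\mathcal{T}$ is hidden in the hypotheses.
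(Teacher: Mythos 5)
Your construction is exactly the paper's: with ${\bf S}={\bf G}^T$ and ${\bf T}={\bf H}$ for generator matrices ${\bf G},{\bf H}$ of $\mathcal{S},\mathcal{T}$, your ${\bf X}={\bf S}{\bf T}$ is the paper's ${\bf X}={\bf G}^T{\bf H}$, and your verification of the rank and of the column/row spans just fills in the routine linear algebra the paper leaves implicit. The proposal is correct and follows the same approach.
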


\begin{proof}
Consider the generator matrices ${\bf G} \in \mathrm{GF}(q)^{r
\times m}$ and ${\bf H} \in \mathrm{GF}(q)^{r \times n}$ of
$\mathcal{S}$ and $\mathcal{T}$, respectively. Let ${\bf X} = {\bf
G}^T {\bf H}$ and ${\bf x}$ be the vector whose expansion with
respect to $B_m$ is given by ${\bf X}$. Then $\mathfrak{S}({\bf x})
= \mathcal{S}$ and $\mathfrak{T}({\bf x}) = \mathcal{T}$.
\end{proof}

By Lemma~\ref{lemma:S_T}, the functions $\mathfrak{S}$ and
$\mathfrak{T}$ are surjective. They are not injective, however. For
all $\mathcal{V} \in E_r(q^m,n)$, there exist exactly $\alpha(m,r)$
vectors ${\bf x} \in \mathcal{V}$ with rank $r$
\cite{gadouleau_it06}, hence for all $\mathcal{T} \in E_r(q,n)$
there exist exactly $\alpha(m,r)$ vectors ${\bf x}$ such that
$\mathfrak{T}({\bf x}) = \mathcal{T}$. By transposition, it follows
that there exist exactly $\alpha(n,r)$ vectors ${\bf x}$ such that
$\mathfrak{S}({\bf x}) = \mathcal{S}$ for all $\mathcal{S} \in
E_r(q,m)$.

For any $C \subseteq \mathrm{GF}(q^m)^n$, define $\mathfrak{S}(C)
\df \{ \mathfrak{S}({\bf c})| {\bf c} \in C \}$ and $\mathfrak{T}(C)
\df \{ \mathfrak{T}({\bf c})| {\bf c} \in C \}$. We obtain the
following lemma.

\begin{lemma}\label{lemma:|S(C)|}
For all $C \subseteq \mathrm{GF}(q^m)^n$, we have $|\mathfrak{S}(C)|
\leq |C| \leq \alpha(n,r)|\mathfrak{S}(C)|$ and $|\mathfrak{T}(C)|
\leq |C| \leq \alpha(m,r)|\mathfrak{T}(C)|.$
%\begin{eqnarray}
%    \nonumber
%    |\mathfrak{S}(C)| \leq &|C|& \leq \alpha(n,r)|\mathfrak{S}(C)|\\
%    \nonumber
%    |\mathfrak{T}(C)| \leq &|C|& \leq \alpha(m,r)|\mathfrak{T}(C)|.
%\end{eqnarray}
\end{lemma}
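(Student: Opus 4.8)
The plan is to read off both inequalities from the fact that $\mathfrak{S}$ and $\mathfrak{T}$, restricted to $C$, are surjections onto $\mathfrak{S}(C)$ and $\mathfrak{T}(C)$ whose fibers have bounded size. I will only treat $\mathfrak{S}$ in detail; the statement for $\mathfrak{T}$ follows by the same argument (or by transposition, exchanging the roles of $m$ and $n$).

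First, the lower bound $|\mathfrak{S}(C)| \leq |C|$ is immediate: by definition $\mathfrak{S}(C) = \{\mathfrak{S}({\bf c}) : {\bf c} \in C\}$ is the image of $C$ under the map ${\bf c} \mapsto \mathfrak{S}({\bf c})$, so its cardinality cannot exceed that of $C$.

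For the upper bound, I would partition $C$ according to the value of $\mathfrak{S}$. Write $C = \bigcup_{\mathcal{S} \in \mathfrak{S}(C)} C_{\mathcal{S}}$, where $C_{\mathcal{S}} \df \{{\bf c} \in C : \mathfrak{S}({\bf c}) = \mathcal{S}\}$, a disjoint union. The key input, already established in the paragraph preceding the lemma, is that for every $\mathcal{S} \in E_r(q,m)$ there are exactly $\alpha(n,r)$ vectors ${\bf x} \in \mathrm{GF}(q^m)^n$ with $\mathfrak{S}({\bf x}) = \mathcal{S}$ (this in turn comes from \cite{gadouleau_it06} via transposition of the count of rank-$r$ vectors in a fixed ELS of $E_r(q^m,n)$). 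Hence $|C_{\mathcal{S}}| \leq \alpha(n,r)$ for each $\mathcal{S}$, and summing over the $|\mathfrak{S}(C)|$ classes gives $|C| = \sum_{\mathcal{S} \in \mathfrak{S}(C)} |C_{\mathcal{S}}| \leq \alpha(n,r)\,|\mathfrak{S}(C)|$.

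The same computation with $\mathfrak{T}$ in place of $\mathfrak{S}$ and the bound $\alpha(m,r)$ on each fiber $\{{\bf x} : \mathfrak{T}({\bf x}) = \mathcal{T}\}$ yields $|\mathfrak{T}(C)| \leq |C| \leq \alpha(m,r)\,|\mathfrak{T}(C)|$. There is no real obstacle here: the argument is a one-line fiber-counting estimate, and the only nontrivial ingredient — the exact fiber sizes $\alpha(n,r)$ and $\alpha(m,r)$ — has already been supplied in the text, so the proof is essentially a bookkeeping step.
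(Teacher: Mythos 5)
Your argument is correct and is exactly the (implicit) proof the paper intends: the lemma is stated immediately after the paper establishes the fiber sizes of $\mathfrak{S}$ and $\mathfrak{T}$, and the lower bound by surjectivity onto the image plus the upper bound by counting fibers of size at most $\alpha(n,r)$ (resp.\ $\alpha(m,r)$) is precisely the bookkeeping step being relied on. The only cosmetic remark is that, as in the paper, the statement implicitly assumes $C$ is a constant-rank code of rank $r$ so that $\mathfrak{S}$ and $\mathfrak{T}$ map into $E_r(q,m)$ and $E_r(q,n)$, which your fiber decomposition tacitly uses.
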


%For all ${\bf x} \in \mathrm{GF}(q^m)^n$, the coordinates of ${\bf
%x}$ span a linear subspace of $\mathrm{GF}(q^m)$, denoted as
%$\mathfrak{S}({\bf x})$, with dimension equal to $\rk({\bf x})$. For
%any code $C \in \mathrm{GF}(q^m)^{n}$, we define $\mathfrak{S}(C)
%\df \{ \mathfrak{S}({\bf x}) | {\bf x} \in C \}$.

\begin{proposition}\label{prop:subspace_v_rank}
For any constant-dimension code $\Gamma \subseteq E_r(q,m)$, there
exists a constant-rank code $C$ with length $n$ and constant-rank
$r$ over $\mathrm{GF}(q^m)$ such that $r \leq n \leq m$ and
$\mathfrak{S}(C) = \Gamma$. The cardinality of $C$ satisfies
$|\Gamma| \leq |C| \leq \alpha(n,r)|\Gamma|$. On the other hand, for
any constant-dimension code $\Delta \subseteq E_r(q,n)$, there
exists a constant-rank code $D$ with length $n$ and constant-rank
$r$ over $\mathrm{GF}(q^m)$ such that $r \leq n \leq m$ and
$\mathfrak{T}(D) = \Delta$. The cardinality of $D$ satisfies
$|\Delta| \leq |D| \leq \alpha(m,r)|\Delta|$.
\end{proposition}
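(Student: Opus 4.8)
The plan is to construct $C$ with exactly one codeword per subspace of $\Gamma$, using Lemma~\ref{lemma:S_T} to prescribe the column span of each chosen vector, and then to read off the cardinality bounds directly from Lemma~\ref{lemma:|S(C)|}.

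First I would fix any length $n$ with $r \leq n \leq m$. Since $r \leq n$, the set $E_r(q,n)$ is nonempty (indeed $|E_r(q,n)| = {n \brack r} \geq 1$), so I may fix some $\mathcal{T}_0 \in E_r(q,n)$. For each $\mathcal{S} \in \Gamma \subseteq E_r(q,m)$, Lemma~\ref{lemma:S_T} applied to the pair $(\mathcal{S},\mathcal{T}_0)$ produces a vector ${\bf x}_{\mathcal{S}} \in \mathrm{GF}(q^m)^n$ of rank $r$ with $\mathfrak{S}({\bf x}_{\mathcal{S}}) = \mathcal{S}$. Setting $C \df \{ {\bf x}_{\mathcal{S}} : \mathcal{S} \in \Gamma \}$, every codeword of $C$ has rank $r$, so $C$ is a constant-rank code of length $n$ and constant-rank $r$; moreover the map $\mathcal{S} \mapsto {\bf x}_{\mathcal{S}}$ is injective because the subspaces $\mathfrak{S}({\bf x}_{\mathcal{S}}) = \mathcal{S}$ are pairwise distinct, whence $\mathfrak{S}(C) = \Gamma$ and $|C| = |\Gamma|$.

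Next I would obtain the cardinality inequalities for free: applying Lemma~\ref{lemma:|S(C)|} to this $C$ gives $|\Gamma| = |\mathfrak{S}(C)| \leq |C| \leq \alpha(n,r)|\mathfrak{S}(C)| = \alpha(n,r)|\Gamma|$, which is the asserted range. In fact both ends are attainable: the left by the $C$ above, and the right by taking for $C$ the full set of rank-$r$ vectors ${\bf x}$ with $\mathfrak{S}({\bf x}) \in \Gamma$, which has $\mathfrak{S}(C)=\Gamma$ by surjectivity of $\mathfrak{S}$ and size $\alpha(n,r)|\Gamma|$ by the fiber count preceding Lemma~\ref{lemma:|S(C)|}. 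The statement for $\mathfrak{T}$ and a constant-dimension code $\Delta \subseteq E_r(q,n)$ is entirely symmetric: fix some $\mathcal{S}_0 \in E_r(q,m)$ (nonempty since $r \leq m$), use Lemma~\ref{lemma:S_T} on each pair $(\mathcal{S}_0,\mathcal{T})$ with $\mathcal{T} \in \Delta$ to get a rank-$r$ vector with row span $\mathcal{T}$, assemble $D$ from these representatives, and invoke the second half of Lemma~\ref{lemma:|S(C)|} to conclude $|\Delta| \leq |D| \leq \alpha(m,r)|\Delta|$.

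I do not expect a genuine obstacle here, as the proposition is essentially a repackaging of Lemmas~\ref{lemma:S_T} and~\ref{lemma:|S(C)|}. The only points needing care are verifying that $E_r(q,n)$ and $E_r(q,m)$ are nonempty — which is exactly what the hypothesis $r \leq n \leq m$ guarantees, so that Lemma~\ref{lemma:S_T} can be invoked — and selecting a single representative vector per subspace so that the upper bound $\alpha(n,r)|\Gamma|$ is not exceeded.
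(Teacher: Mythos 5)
Your construction is correct and is essentially identical to the paper's own proof: both select, via Lemma~\ref{lemma:S_T}, one rank-$r$ representative vector per subspace of $\Gamma$ to form $C$ with $\mathfrak{S}(C)=\Gamma$, and then read the cardinality bounds from Lemma~\ref{lemma:|S(C)|}, with the $\mathfrak{T}$/$\Delta$ case handled symmetrically. Your additional observation that both extremes of the cardinality range are attainable is a harmless extra beyond what the proposition requires.
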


\begin{proof}
By Lemma~\ref{lemma:S_T}, for any $\mathcal{U} \in \Gamma$ there
exists ${\bf c}_\mathcal{U} \in \mathrm{GF}(q^m)^n$ with rank $r$
such that $\mathfrak{S}({\bf c}_\mathcal{U}) = \mathcal{U}$.
Therefore, the code $C = \{{\bf c}_\mathcal{U}| \mathcal{U} \in
\Gamma\}$ satisfies $\mathfrak{S}(C) = \Gamma$. $C$ is a
constant-rank code with length $n$ and constant-rank $r$ over
$\mathrm{GF}(q^m)$, and by Lemma~\ref{lemma:|S(C)|}, $|C|$ satisfies
$|\Gamma| \leq |C| \leq \alpha(n,r)|\Gamma|$. The proof for $\Delta
\subseteq E_r(q,n)$ is similar and hence omitted.
\end{proof}

%Consequently, for any subspace code $\Omega \subseteq E(q,m)$, there
%exists a code $C \subseteq \mathrm{GF}(q^m)^{m}$ such that
%$\mathfrak{S}(C) = \Omega$.

Proposition~\ref{prop:subspace_v_rank} shows that constant-dimension
codes can be viewed as a special class of constant-rank codes.
Although the rank metric is not directly related to the subspace
metric in general, the maximal cardinalities of constant-dimension
codes and constant-rank codes are related.

\begin{proposition}\label{prop:A>As}
For all $q$ and $1 \leq r <d \leq n \leq m$,
\begin{equation}\label{eq:A>As}
    \Ar(q^m,n,d,r) \geq \min \{ \As(q,n,2(d-r),r), \As(q,m,2r,r)\}.
\end{equation}
\end{proposition}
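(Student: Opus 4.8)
The plan is to build, from a pair of suitably chosen constant-dimension codes, a single constant-rank code that realizes the minimum on the right-hand side. Concretely, set $N \df \min\{\As(q,n,2(d-r),r),\As(q,m,2r,r)\}$; I would fix an optimal $(m,2r,r)$ constant-dimension code and an optimal $(n,2(d-r),r)$ constant-dimension code, and by passing to subcodes take an $N$-element subcode $\Gamma = \{\mathcal{S}_1,\dots,\mathcal{S}_N\} \subseteq E_r(q,m)$ of the former and an $N$-element subcode $\Delta = \{\mathcal{T}_1,\dots,\mathcal{T}_N\} \subseteq E_r(q,n)$ of the latter. By Lemma~\ref{lemma:S_T}, for each $i$ I can pick ${\bf x}_i \in \mathrm{GF}(q^m)^n$ of rank $r$ with $\mathfrak{S}({\bf x}_i) = \mathcal{S}_i$ and $\mathfrak{T}({\bf x}_i) = \mathcal{T}_i$; since the $\mathcal{T}_i$ are pairwise distinct the ${\bf x}_i$ are too, so $C \df \{{\bf x}_i : 1\le i \le N\}$ is a constant-rank code of length $n$, constant-rank $r$, and cardinality $N$.

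The remaining task is to show $\dr({\bf x}_i,{\bf x}_j) \ge d$ for all $i \neq j$. First I would translate the subspace-distance hypotheses via \eqref{eq:ds}: $\ds(\mathcal{S}_i,\mathcal{S}_j) \ge 2r$ together with $\dim \mathcal{S}_i = \dim \mathcal{S}_j = r$ forces $\mathcal{S}_i \cap \mathcal{S}_j = \{0\}$, while $\ds(\mathcal{T}_i,\mathcal{T}_j) \ge 2(d-r)$ forces $\dim(\mathcal{T}_i + \mathcal{T}_j) \ge d$. Writing ${\bf X}_i,{\bf X}_j \in \mathrm{GF}(q)^{m\times n}$ for the expansions with respect to $B_m$ and viewing them as linear maps $\mathrm{GF}(q)^n \to \mathrm{GF}(q)^m$, I would use that $\ker {\bf X}_i = \mathfrak{T}({\bf x}_i)^\perp$ (a vector is killed by ${\bf X}_i$ iff it is orthogonal to every row). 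The crux is the observation that if $({\bf X}_i - {\bf X}_j){\bf v} = {\bf 0}$ then the common value ${\bf X}_i {\bf v} = {\bf X}_j {\bf v}$ lies in $\mathfrak{S}({\bf x}_i) \cap \mathfrak{S}({\bf x}_j) = \{0\}$, whence $\ker({\bf X}_i - {\bf X}_j) = \ker {\bf X}_i \cap \ker {\bf X}_j = (\mathfrak{T}({\bf x}_i) + \mathfrak{T}({\bf x}_j))^\perp$. Therefore $\dr({\bf x}_i,{\bf x}_j) = \rk({\bf X}_i - {\bf X}_j) = \dim(\mathcal{T}_i + \mathcal{T}_j) \ge d$, so $C$ has minimum rank distance at least $d$ and $\Ar(q^m,n,d,r) \ge N$.

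The hard part is really just this linear-algebra identity — that disjointness of the column spans $\mathfrak{S}$ collapses $\ker({\bf X}_i - {\bf X}_j)$ down to $\ker {\bf X}_i \cap \ker {\bf X}_j$, so the rank of the difference is exactly $\dim(\mathfrak{T}({\bf x}_i) + \mathfrak{T}({\bf x}_j))$ — together with the accompanying bookkeeping: the $\As(q,m,\cdot)$ factor must be ``spent'' on the column spans to buy that disjointness, while the $\As(q,n,\cdot)$ factor controls the row spans and hence the actual rank distance. Degenerate situations (for instance $N = 1$, or $2r > m$ so that $\As(q,m,2r,r) = 1$) make the distance requirement vacuous and need no separate argument, and $N \ge 1$ always holds since $E_r(q,n)$ and $E_r(q,m)$ are nonempty when $r \le n \le m$.
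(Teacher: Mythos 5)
Your proposal is correct and is essentially the paper's own construction: the paper likewise pairs an optimal $(m,2r,r)$ constant-dimension code (column spans, forcing $\mathcal{S}_i\cap\mathcal{S}_j=\{0\}$) with an optimal constant-dimension code in $E_r(q,n)$ (row spans, controlling the distance), realizing each pair exactly as in Lemma~\ref{lemma:S_T} via expansion matrices of the form ${\bf G}^T{\bf H}$. The only difference is cosmetic: the paper verifies $\dr({\bf c}_i,{\bf c}_j)=\dim(\mathcal{T}_i+\mathcal{T}_j)$ by re-expanding the pair in a basis adapted to $\mathcal{S}_i\oplus\mathcal{S}_j$, whereas you get the same identity from $\ker({\bf X}_i-{\bf X}_j)=\ker{\bf X}_i\cap\ker{\bf X}_j=(\mathcal{T}_i+\mathcal{T}_j)^\perp$, which is the same computation in different clothing.
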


\begin{proof}
Let $\Gamma$ be an optimal $(m,2r,r)$ constant-dimension code over
$\mathrm{GF}(q)$ and $\Delta$ be an optimal $(n,2d,r)$
constant-dimension code over $\mathrm{GF}(q)$. Denote their
cardinalities as $\mu = \As(q,m,2r,r)$ and $\nu = \As(q,n,2d,r)$ and
the generator matrices of their component subspaces as $\{{\bf
X}_i\}_{i=0}^{\mu-1}$ and $\{ {\bf Y}_j \}_{j=0}^{\nu-1}$,
respectively. By~(\ref{eq:ds}), for all $0 \leq i < j \leq \nu-1$,
$2\rk({\bf Y}_i^T \,|\, {\bf Y}_j^T) - 2r \geq 2d$, and hence
$\rk({\bf Y}_i^T \,|\, {\bf Y}_j^T) \geq d+r$.

For all $0 \leq i \leq \mu-1$, define ${\bf b}_i = (\beta_{i,0},
\beta_{i,1}, \ldots, \beta_{i,r-1}) \in \mathrm{GF}(q^m)^r$ such
that the expansion of $\beta_{i,l}$ with respect to a basis $B_m$ of
$\mathrm{GF}(q^m)$ is given by the $l$-th row of ${\bf X}_i$. For
all $0 \leq i < j \leq \nu-1$, the matrix $({\bf X}_i^T \,|\, {\bf
X}_j^T)$ has full rank by~(\ref{eq:ds}) and hence the elements
$\{\beta_{i,0}, \ldots, \beta_{i,r-1}, \beta_{j,0}, \ldots,
\beta_{j,r-1} \}$ are linearly independent. We thus define the basis
$\gamma_{i,j} = \{\beta_{i,0}, \ldots, \beta_{i,r-1}, \beta_{j,0},
\ldots, \beta_{j,r-1}, \gamma_{2r}, \ldots, \gamma_{m-1}\}$ of
$\mathrm{GF}(q^m)$ over $\mathrm{GF}(q)$.

We define the code $C \subseteq \mathrm{GF}(q^m)^n$ such that ${\bf
c}_i = {\bf b}_i {\bf Y}_i^T$ for $0 \leq i \leq \min\{\mu,\nu\}-1$.
Expanding ${\bf c}_i$ and ${\bf c}_j$ with respect to the basis
$\gamma_{i,j}$, we obtain $\rk({\bf c}_i) = \rk \left( {\bf Y}_i^T
\,|\, {\bf 0} \right) = r$ and $\dr({\bf c}_i, {\bf c}_j) =\rk
\left({\bf Y}_i^T\,|\, -{\bf Y}_j^T \,|\, {\bf 0} \right) = \rk({\bf
Y}_i^T \,|\, {\bf Y}_j^T) \geq d+r.$ Therefore, $C$ is an
$(n,d+r,r)$ constant-rank code over $\mathrm{GF}(q^m)$ with
cardinality $\min\{\mu,\nu\}$.
\end{proof}

\begin{corollary}\label{cor:A>As}
For all $q$ and $m$,
\begin{eqnarray}
    \label{eq:As_2r}
    \Ar(q^m,n,2r,r) &\geq& \As(q,n,2r,r) \quad \mbox{for }\, n \leq m\\
    \label{eq:As_2(d-r)}
    \Ar(q^m,m,d,r) &\geq& \As(q,m,2r,r)  \quad \mbox{for }\, r <d.
\end{eqnarray}
\end{corollary}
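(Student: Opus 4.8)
The plan is to deduce both bounds from Proposition~\ref{prop:A>As} by specializing its parameters, using only two elementary monotonicity facts about constant-dimension codes: $\As(q,n,2d,r)$ is non-increasing in $d$, and $\As(q,m,2d,r)$ is non-decreasing in $m$. The first is immediate from the definition, since a constant-dimension code with minimum subspace distance at least $2d_2 \ge 2d_1$ is in particular one with minimum subspace distance at least $2d_1$. For the second, given an optimal $(n,2d,r)$ constant-dimension code over $\mathrm{GF}(q)$ with $n \le m$ and component generator matrices $\{{\bf X}_i\} \subseteq \mathrm{GF}(q)^{r\times n}$, I would append $m-n$ zero columns to each ${\bf X}_i$; the enlarged matrices still have rank $r$, hence span $r$-dimensional subspaces of $\mathrm{GF}(q)^m$, and by~(\ref{eq:ds}) the pairwise subspace distances, being $2\rk({\bf X}_i^T \,|\, {\bf X}_j^T) - 2r$, are unchanged because appending zero columns to the ${\bf X}$'s only appends zero rows to the stacked matrix ${\bf X}_i^T \,|\, {\bf X}_j^T$. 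Thus $\As(q,m,2d,r) \ge \As(q,n,2d,r)$.

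For~(\ref{eq:As_2r}), I would invoke Proposition~\ref{prop:A>As} with $d$ replaced by $2r$ (permissible since $r < 2r$ and, in the non-degenerate range, $2r \le n$): since $2(2r-r)=2r$, this yields $\Ar(q^m,n,2r,r) \ge \min\{\As(q,n,2r,r),\As(q,m,2r,r)\}$, and as $n \le m$ the length-monotonicity makes the first term the smaller one, giving $\Ar(q^m,n,2r,r)\ge\As(q,n,2r,r)$. For~(\ref{eq:As_2(d-r)}), I would invoke Proposition~\ref{prop:A>As} with $n$ replaced by $m$, obtaining $\Ar(q^m,m,d,r)\ge\min\{\As(q,m,2(d-r),r),\As(q,m,2r,r)\}$; in the relevant range $r<d\le 2r$ we have $1\le d-r\le r$, so $2(d-r)\le 2r$ and the $d$-monotonicity makes $\As(q,m,2(d-r),r)$ the larger term, giving $\Ar(q^m,m,d,r)\ge\As(q,m,2r,r)$. (For $d>2r$ the inequality is degenerate, as then $\Ar(q^m,m,d,r)=1$.)

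I do not foresee a genuine obstacle: the only step requiring some care is the length-monotonicity of $\As$, which the zero-padding argument above settles via the rank formula~(\ref{eq:ds}) for the subspace distance; the remaining work is bookkeeping — checking that each specialization respects the hypotheses $1\le r<d\le n\le m$ of Proposition~\ref{prop:A>As}, with the few degenerate parameter values (e.g.\ $2r>n$, where $\As(q,n,2r,r)=1$) disposed of trivially.
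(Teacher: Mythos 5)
Your proposal is correct and follows essentially the paper's intended route: the corollary is stated without proof precisely because it is the specialization of Proposition~\ref{prop:A>As} to $d=2r$ and to $n=m$, with the minimum resolved by monotonicity of $\As$ in the distance and in the ambient dimension (the latter justified, as you do, by zero-padding the generator matrices). The only loose point is your final parenthetical: for $d>2r$ the second inequality is not merely degenerate but can fail (then $\Ar(q^m,m,d,r)=1$ while $\As(q,m,2r,r)>1$ whenever $2r\le m$), so, like the paper, one must read~(\ref{eq:As_2(d-r)}) under the standing assumption $d\le 2r$.
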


%\begin{proof}
%Eq.~(\ref{eq:As_2r}) and~(\ref{eq:As_2(d-r)}) directly follow
%Proposition~\ref{prop:A>As}.
%\end{proof}

Therefore, a lower bound on $\As$ is also a lower bound on $\Ar$ for
$r < d$. We may use the lower bound on $\As$
in~(\ref{eq:bounds_As}).

\section{Bounds on constant-rank codes}\label{sec:bounds}

We derive bounds on the maximum cardinality of constant-rank codes.
We first observe that $\Ar(q^m,n,d,r)$ is a non-decreasing function
of $m$ and $n$, and a non-increasing function of $d$. We also remark
that the bounds on $\Ar(q^m,n,d,r)$ derived in
Section~\ref{sec:dimension_v_rank} for $2r \leq n$ can be easily
adapted for $2r > n$ by applying them to $n-r$ instead. Finally,
since $\Ar(q^m,n,1,r) = N_r(q^m,n)$ and $\Ar(q^m,n,d,r) = 1$ for $d
> 2r$, we shall assume $2 \leq d \leq 2r$ henceforth.

By considering the Singleton bound for rank metric codes or MRD
codes, we obtain a lower bound and some upper bounds on
$\Ar(q^m,n,d,r)$.

\begin{proposition}\label{prop:A_MRD}
For all $q$ and $1 \leq r,d \leq n \leq m$,
\begin{eqnarray}
    \label{eq:A_lower_MRD}
    \Ar(q^m,n,d,r) &\geq& M_{d,r} \quad \mbox{for }\, r \geq d\\
    \label{eq:A_upper_MRD2}
    \Ar(q^m,n,d,r) &\leq& q^{m(n-d+1)} - \sum_{j \in J_a}
    \Ar(q^m,n,d,j)\\
    \label{eq:A_upper_MRD1}
    \Ar(q^m,n,d,r) &\leq& q^{m(n-d+1)} - \sum_{i \in I_r} M_{d,i}\\
    \label{eq:A_upper_MRD}
    \Ar(q^m,n,d,r) &\leq& q^{m(n-d+1)} - 1 \quad \mbox{for }\, r \geq
    d,
\end{eqnarray}
where $I_r \df \{ i \,:\, 0 \leq i \leq n, |i-r| \geq d \}$ and $J_a
\df I_r \cap \{a + kd \,:\, k \in \mathbb{Z}\}$ for $0 \leq a < d$.
\end{proposition}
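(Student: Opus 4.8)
The plan is to derive all four inequalities from one basic fact: an $(n,n-d+1,d)$ linear MRD code $G$ over $\mathrm{GF}(q^m)$ exists for all parameter values, it has exactly $q^{m(n-d+1)}$ codewords, minimum rank distance $d$, and the number of its codewords of rank $i$ is $M_{d,i}$ (with $M_{d,i}$ interpreted as $0$ when $i<d$, and the zero codeword contributing the unique rank-$0$ codeword). So I would first record the weight-distribution identity $\sum_{i=0}^{n} |\{{\bf c}\in G : \rk({\bf c})=i\}| = q^{m(n-d+1)}$, where the rank-$i$ count equals $M_{d,i}$ for $i\geq d$ and the rank-$0$ count is $1$.

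For the lower bound~(\ref{eq:A_lower_MRD}): when $r\geq d$, the set of codewords of $G$ having rank exactly $r$ is itself a constant-rank code of length $n$, constant-rank $r$, over $\mathrm{GF}(q^m)$; any two distinct codewords differ by a nonzero codeword of $G$, which has rank $\geq d$, so the minimum rank distance is at least $d$. This subcode has cardinality $M_{d,r}$, giving $\Ar(q^m,n,d,r)\geq M_{d,r}$. For~(\ref{eq:A_upper_MRD}): still assuming $r\geq d$, a constant-rank code of rank $r$ with minimum rank distance $\geq d$ is in particular a code with minimum rank distance $\geq d$, hence by the Singleton bound for rank metric codes it has at most $q^{m(n-d+1)}$ codewords; but it cannot contain the zero vector (which has rank $0\neq r$), and adjoining ${\bf 0}$ to it keeps the minimum distance $\geq d$ since every codeword has rank $r\geq d$, so the enlarged code still obeys the Singleton bound, forcing the original to have at most $q^{m(n-d+1)}-1$ codewords.

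The two remaining bounds refine this "reserve room for other ranks" idea. For~(\ref{eq:A_upper_MRD1}): given an optimal $(n,d,r)$ constant-rank code $C$, form $C' = C \cup \{{\bf c}\in G : \rk({\bf c})\in I_r\}$. Every vector added has rank $i$ with $|i-r|\geq d$, so its distance to any rank-$r$ vector of $C$ is at least $|i-r|\geq d$ by the triangle inequality (indeed $\dr({\bf x},{\bf y})\geq |\rk({\bf x})-\rk({\bf y})|$), and distances within the MRD part are $\geq d$ because those vectors lie in $G$; thus $C'$ has minimum rank distance $\geq d$ and therefore $|C'|\leq q^{m(n-d+1)}$ by the Singleton bound. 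Since the two pieces are disjoint (different ranks) and the MRD part contributes $\sum_{i\in I_r} M_{d,i}$ codewords (note $0\in I_r$ exactly when $r\geq d$, and then $M_{d,0}$ should be read as the single zero codeword; I will make this bookkeeping explicit), rearranging gives~(\ref{eq:A_upper_MRD1}). For~(\ref{eq:A_upper_MRD2}) I would do the same but replace the MRD subcode in each residue class $a+k d$ by an optimal constant-rank code of that rank: fix $a$ with $0\leq a<d$, take optimal $(n,d,j)$ constant-rank codes $C_j$ for each $j\in J_a\setminus\{r\}$, and set $C'=C\cup\bigcup_{j\in J_a, j\neq r} C_j$. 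Any two vectors of ranks $j_1\neq j_2$ both in $J_a\cup\{r\}$ differ in rank by a positive multiple of $d$, hence are at rank distance $\geq d$; vectors of the same rank lie in a common $C_j$ (or in $C$) and are at distance $\geq d$ by construction. So $C'$ again satisfies the Singleton bound, and since the pieces are pairwise disjoint we get $\Ar(q^m,n,d,r) + \sum_{j\in J_a} \Ar(q^m,n,d,j) - \Ar(q^m,n,d,r) = |C'| - |C| + |C|$; cleaning this up yields $\Ar(q^m,n,d,r)\leq q^{m(n-d+1)} - \sum_{j\in J_a}\Ar(q^m,n,d,j) + \Ar(q^m,n,d,r)$, i.e. the stated bound after cancelling the $\Ar(q^m,n,d,r)$ on each side—so in fact the sum on the right of~(\ref{eq:A_upper_MRD2}) is over $J_a$ with the $j=r$ term omitted by the cancellation, which matches the statement once one checks $r\in J_a$.

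The main obstacle I anticipate is the edge-case bookkeeping around rank $0$: whether $0\in I_r$ (equivalently $r\geq d$), how to count the zero codeword, and making sure that in~(\ref{eq:A_upper_MRD2}) the index $r$ genuinely lies in $J_a$ for the relevant $a$ so that the cancellation producing the clean statement is legitimate. A secondary point to verify carefully is the distance-lower-bound $\dr({\bf x},{\bf y})\geq |\rk({\bf x})-\rk({\bf y})|$, which follows from subadditivity of rank weight ($\rk({\bf x})\leq \rk({\bf y})+\rk({\bf x}-{\bf y})$), but should be stated explicitly since it is the glue holding the disjoint pieces of $C'$ together at distance $\geq d$.
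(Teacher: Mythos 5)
Your overall strategy is the paper's: the lower bound from the rank-$r$ codewords of an $(n,n-d+1,d)$ linear MRD code, the bound $q^{m(n-d+1)}-1$ from adjoining the zero vector to a constant-rank code with $r\geq d$ and applying the Singleton bound, and the two remaining bounds from applying the Singleton bound to the union of an optimal $(n,d,r)$ constant-rank code with, respectively, the MRD codewords of ranks in $I_r$ and optimal constant-rank codes of ranks in $J_a$. Your arguments for (\ref{eq:A_lower_MRD}), (\ref{eq:A_upper_MRD}) and (\ref{eq:A_upper_MRD1}) are sound, and making explicit both the inequality $\dr({\bf x},{\bf y})\geq |\rk({\bf x})-\rk({\bf y})|$ and the rank-$0$ bookkeeping (which only makes the stated bound weaker than what the union actually yields) is a genuine improvement on the paper's terse proof.

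The conclusion you draw for (\ref{eq:A_upper_MRD2}), however, is broken as written. You premise it on $r\in J_a$ and then ``cancel $\Ar(q^m,n,d,r)$ on each side'' of $\Ar(q^m,n,d,r)\leq q^{m(n-d+1)}-\sum_{j\in J_a}\Ar(q^m,n,d,j)+\Ar(q^m,n,d,r)$; that cancellation is invalid ($X\leq Y+X$ does not imply $X\leq Y$), and the premise is false anyway: since $J_a=I_r\cap\{a+kd\,:\,k\in\mathbb{Z}\}$ and $|r-r|=0<d$, we have $r\notin I_r$ and hence $r\notin J_a$ for every $a$. The correct ending is simpler and is what the paper does: take $C'=C\cup\bigcup_{j\in J_a}C_j$ over \emph{all} $j\in J_a$. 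The pieces are pairwise disjoint because their ranks all differ; distances within $C$ and within each $C_j$ are at least $d$ by construction; distances between two different $C_j$'s are at least $d$ because their ranks differ by a nonzero multiple of $d$; and distances between $C$ and any $C_j$ are at least $d$ because $J_a\subseteq I_r$, i.e.\ $|j-r|\geq d$ --- note that your claim that any two ranks in $J_a\cup\{r\}$ differ by a positive multiple of $d$ is wrong for pairs involving $r$, and the membership $j\in I_r$ is the correct reason there. The Singleton bound then gives $\Ar(q^m,n,d,r)+\sum_{j\in J_a}\Ar(q^m,n,d,j)\leq q^{m(n-d+1)}$ directly, which is (\ref{eq:A_upper_MRD2}) with no cancellation needed.
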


\begin{proof}
The codewords of rank $r$ in an $(n, n-d+1,d)$ linear MRD code over
$\mathrm{GF}(q^m)$ form an $(n,d,r)$ constant-rank code. Thus,
$\Ar(q^m,n,d,r) \geq M_{d,r}$ for $r \geq d$.

%We now prove~(\ref{eq:A_upper_MRD}). Suppose there exists an
%$(n,d,r)$ ($r \geq d$) constant-rank code $C$ over
%$\mathrm{GF}(q^m)$ with cardinality $M \geq q^{m(n-d+1)}$. Then $C
%\cup \{{\bf 0} \}$ is a code with minimum rank distance $d$ and
%cardinality greater than $q^{m(n-d+1)}$, which contradicts the
%Singleton bound for rank metric codes. Therefore, $M <
%q^{m(n-d+1)}$.

Let $C$ be an $(n,n-d+1,d)$ linear MRD code over $\mathrm{GF}(q^m)$,
and denote its codewords with ranks belonging to $I_r$ as $C'$. For
$0 \leq j \leq n$, let $C_j$ be optimal $(n,d,j)$ constant-rank
codes and define $C'' \df \bigcup_{j \in J_a} C_j$. The Singleton
bound on the codes $C_r \cup C'$ and $C_r \cup C''$
yields~(\ref{eq:A_upper_MRD1}) and~(\ref{eq:A_upper_MRD2}),
respectively.

Finally, the Singleton bound on $C \cup \{0\}$, where $C$ is an
$(n,d,r)$ ($r \geq d$) constant-rank code over $\mathrm{GF}(q^m)$,
yields~(\ref{eq:A_upper_MRD}).
\end{proof}

\begin{proposition}\label{prop:A_bounds}
For all $q$ and $1 \leq r,d \leq n \leq m$,
\begin{eqnarray}
    \label{eq:bassalygo}
    \Ar(q^m,n,d,r) &\geq& N_r(q^m,n) q^{m(-d+1)}\\
    \nonumber
    \Ar(q^m,m,d,m) &\leq& \Ar(q^{m-1},m-1,d,m-1)\\
    \label{eq:johnson_m-1}
    &\cdot&  q^{m-1}(q^m-1) \quad \mbox{for }\, d<m\\
    \nonumber
    \Ar(q^m,n,d,r) &\leq&  \Ar(q^m,n-1,d,r)\\
    \label{eq:johnson_r}
    &\cdot& \frac{q^n-1}{q^{n-r}-1} \quad \mbox{for }\, r<n.
\end{eqnarray}
\end{proposition}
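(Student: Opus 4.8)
The statement bundles three inequalities: a Bassalygo--Elias-type lower bound \eqref{eq:bassalygo}, and two Johnson-type recursive upper bounds \eqref{eq:johnson_m-1} and \eqref{eq:johnson_r}. I would treat each separately.

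For \eqref{eq:bassalygo}, the plan is a double-counting / averaging argument on the vertex-transitive graph $R_q(m,n,d)$. An MRD code is an optimal independent set of $R_q(m,n,d)$ of size $q^{m(n-d+1)}$; by vertex transitivity (Lemma~\ref{lemma:vertex_transitive}) its translates tile $\mathrm{GF}(q^m)^n$, so a uniformly random translate of a fixed MRD code contains, in expectation, exactly $N_r(q^m,n) \cdot q^{m(n-d+1)} / q^{mn} = N_r(q^m,n) q^{m(-d+1)}$ vectors of rank $r$. Hence some translate contains at least that many rank-$r$ vectors, and those form an $(n,d,r)$ constant-rank code (minimum distance is preserved under translation), giving the bound. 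Alternatively one can invoke \eqref{eq:alpha_G_H} with a homomorphism from $K_q(m,n,d,r)$ into $R_q(m,n,d)$ (the inclusion map), which yields $\alpha(K_q(m,n,d,r)) \geq \alpha(R_q(m,n,d)) |K_q(m,n,d,r)|/|R_q(m,n,d)|$, i.e.\ exactly \eqref{eq:bassalygo}; this is the cleaner route since all the pieces are already set up in Section~\ref{sec:graph_theory}.

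For the Johnson-type bounds \eqref{eq:johnson_m-1} and \eqref{eq:johnson_r}, the plan is the classical Johnson shortening/averaging argument adapted to the rank metric via ELS's. For \eqref{eq:johnson_r}: given an optimal $(n,d,r)$ constant-rank code $C$ over $\mathrm{GF}(q^m)$ with $r<n$, consider the $(n,n-1)$-dimensional ELS's (there are $\frac{q^n-1}{q-1}$... rather, ${n \brack n-1} = \frac{q^n-1}{q-1}$ of them). For each ELS $\mathcal{W} \in E_{n-1}(q^m,n)$, the codewords of $C$ lying in $\mathcal{W}$ form a constant-rank code inside $\mathcal{W} \cong \mathrm{GF}(q^m)^{n-1}$ with the same minimum distance, so at most $\Ar(q^m,n-1,d,r)$ of them. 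A rank-$r$ vector lies in a unique $\mathcal{V} \in E_r(q^m,n)$, and an ELS of dimension $r$ is contained in ${n-r \brack 1} = \frac{q^{n-r}-1}{q-1}$ of the hyperplane-ELS's $\mathcal{W}$; counting incidences $(\mathbf{c},\mathcal{W})$ with $\mathbf{c}\in C$, $\mathbf{c}\in\mathcal{W}$ two ways gives $|C| \cdot \frac{q^{n-r}-1}{q-1} \leq {n \brack n-1} \Ar(q^m,n-1,d,r) = \frac{q^n-1}{q-1}\Ar(q^m,n-1,d,r)$, which rearranges to \eqref{eq:johnson_r}. Bound \eqref{eq:johnson_m-1} is the ``dual'' statement for the parameter $m$ in the full-rank case $r=n=m$, $d<m$: here one transposes (using $\Ar(q^m,m,d,m) = \Ar(q^m,m,d,m)$ symmetry in the two field sizes) and restricts expansion matrices to their first $m-1$ rows, i.e.\ one works with subspaces $\mathcal{S} \in E_{m-1}(q,m)$ of $\mathrm{GF}(q)^m$ containing the column span; a full-rank $m\times m$ matrix has column span all of $\mathrm{GF}(q)^m$, so this needs a slightly different incidence count — project onto a hyperplane of $\mathrm{GF}(q)^m$ and track how the rank drops — yielding the factor $q^{m-1}(q^m-1)/(q^{m-1}-1)\cdot(q^{m-1}-1) = q^{m-1}(q^m-1)$ after accounting for the $\alpha(m,r)$-to-$\alpha(m-1,r-1)$ change in the fiber sizes of $\mathfrak{S}$.

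The main obstacle will be bound \eqref{eq:johnson_m-1}: unlike \eqref{eq:johnson_r}, the natural "shortening" object is not an ELS of $\mathrm{GF}(q^m)^n$ but a restriction on the $\mathfrak{S}$-side (subspaces of $\mathrm{GF}(q)^m$), and because we are in the full-rank regime $r=m$ there is no room to shrink the column span directly — so the argument must instead delete a row from every expansion matrix and carefully control the resulting rank and the multiplicities (each codeword of the shortened code is hit by $q^{m-1}$ or so extensions, and each rank-$m$ vector restricts to a rank-$(m-1)$ or rank-$m$... actually rank-$(m-1)$ vector in $\mathrm{GF}(q^{m-1})^m$). Getting the combinatorial identity $q^{m-1}(q^m-1)$ exactly right — reconciling $|E_{m-1}(q,m)| = \frac{q^m-1}{q-1}$, the fiber sizes $\alpha(m,m)$ vs.\ $\alpha(m-1,m-1)$, and the number of rank-$m$ preimages of a given rank-$(m-1)$ restricted vector — is the delicate step; everything else is routine double counting once the right incidence structure is identified.
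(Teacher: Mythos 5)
Your treatments of \eqref{eq:bassalygo} and \eqref{eq:johnson_r} are correct and essentially identical to the paper's: for \eqref{eq:bassalygo} the paper uses exactly your ``cleaner route'' (inclusion homomorphism $K_q(m,n,d,r)\to R_q(m,n,d)$, vertex transitivity of $R_q(m,n,d)$ from Lemma~\ref{lemma:vertex_transitive}, and \eqref{eq:alpha_G_H}; your coset-of-an-MRD-code averaging is just this lemma unwound), and for \eqref{eq:johnson_r} the paper does the same incidence count between codewords and $(n-1)$-dimensional ELS's, using that a rank-$r$ vector lies in exactly $(q^{n-r}-1)/(q-1)$ of them and that $C\cap\mathcal{U}$ restricts to an $(n-1,d,r)$ constant-rank code.

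The gap is \eqref{eq:johnson_m-1}, which you explicitly leave as ``the delicate step'' and for which your sketched route would not succeed. Deleting a row from every expansion matrix does not preserve the rank metric: $\rk({\bf X}-{\bf Z})$ can drop by one when a row is removed, so the shortened code is only guaranteed minimum distance $d-1$, not $d$; moreover the shortened vectors live in $\mathrm{GF}(q^{m-1})^{m}$, not $\mathrm{GF}(q^{m-1})^{m-1}$, and since for $d\geq 2$ no two codewords can agree on $m-1$ rows, the row-deletion map is injective on $C$, so no multiplicity of order $q^{m-1}(q^m-1)$ can come out of counting preimages. The most such an argument yields is a bound of the shape $\Ar(q^m,m,d,m)\leq \Ar(q^{m-1},m,d-1,m-1)$, which is not \eqref{eq:johnson_m-1}. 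The paper's argument is of a different nature: it constructs a homomorphism in the \emph{opposite} direction, $g\colon K_q(m-1,m-1,d,m-1)\to K_q(m,m,d,m)$, by padding the expansion matrix with a unit row and column as in \eqref{eq:g} (this clearly preserves rank distances and full rank), then invokes the vertex transitivity of $K_q(m,m,d,m)$ — the second, nontrivial half of Lemma~\ref{lemma:vertex_transitive}, proved via the action ${\bf X}\mapsto {\bf X}{\bf U}^{-1}{\bf V}$ of invertible matrices — and applies \eqref{eq:alpha_G_H}. This gives $\Ar(q^{m-1},m-1,d,m-1)\geq \Ar(q^m,m,d,m)\,\alpha(m-1,m-1)/\alpha(m,m)$, and the factor $q^{m-1}(q^m-1)=\alpha(m,m)/\alpha(m-1,m-1)$ is simply the ratio of the orders of the two constant-rank graphs; no incidence structure or fiber-size bookkeeping is needed. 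Your proposal never uses the vertex transitivity of the full-rank constant-rank graph, which is precisely the ingredient that makes \eqref{eq:johnson_m-1} (and only the case $r=n=m$ of such a bound) provable this way.
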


\begin{proof}
Since $K_q(m,n,d,r)$ is a subgraph of $R_q(m,n,d)$, the inclusion
map is a trivial homomorphism from $K_q(m,n,d,r)$ to $R_q(m,n,d)$.
By Lemma~\ref{lemma:vertex_transitive}, $R_q(m,n,d)$ is vertex
transitive. We hence apply~(\ref{eq:alpha_G_H}) to these graphs,
which yields~(\ref{eq:bassalygo}).

Let $B_{m-1}$ and $B_m$ be bases sets over $\mathrm{GF}(q)$ of
$\mathrm{GF}(q^{m-1})$ and $\mathrm{GF}(q^m)$, respectively. For all
${\bf x} \in \mathrm{GF}(q^{m-1})^{m-1}$ with rank $m-1$, define
$g({\bf x}) = {\bf y} \in \mathrm{GF}(q^m)^m$ such that
\begin{equation}
    \label{eq:g}
    {\bf Y} = \left(\begin{array}{c|c}
    {\bf X} & {\bf 0}\\
    \hline
    {\bf 0} & 1
    \end{array}\right) \in \mathrm{GF}(q)^{m \times m},
\end{equation}
where ${\bf X}$ and ${\bf Y}$ are the expansions of ${\bf x}$ and
${\bf y}$ with respect to $B_{m-1}$ and $B_m$, respectively.
By~(\ref{eq:g}), for all ${\bf x}, {\bf z} \in
\mathrm{GF}(q^{m-1})^{m-1}$ with rank $m-1$, we have $\rk(g({\bf
x})) = \rk({\bf x}) + 1 = m$ and $\rk(g({\bf x}) - g({\bf z})) =
\rk({\bf x} - {\bf z})$. Therefore $g$ is a homomorphism from
$K_q(m-1,m-1,d,m-1)$ to $K_q(m,m,d,m)$.
Applying~(\ref{eq:alpha_G_H}) to these graphs, and noticing that
$\alpha(m,m) = q^{m-1} (q^m-1) \alpha(m-1,m-1)$, we
obtain~(\ref{eq:johnson_m-1}).

We now prove~(\ref{eq:johnson_r}). Note that any vector ${\bf x} \in
\mathrm{GF}(q^m)^n$ with rank $r$ belongs to ${n-r \brack 1}$ ELS's
of dimension $n-1$. Indeed, such ELS's are of the form
$\mathcal{E}({\bf x}) \oplus \mathcal{N}$, where $\mathcal{N} \in
E_{n-r-1}(q^m,n-r)$.

Let $C$ be an optimal $(n,d,r)$ constant-rank code over
$\mathrm{GF}(q^m)$. For all ${\bf c} \in C$ and all $\mathcal{V} \in
E_{n-1}(q^m,n)$, we define $f(\mathcal{V},{\bf c}) = 1$ if ${\bf c}
\in \mathcal{V}$ and $f(\mathcal{V}, {\bf c}) = 0$ otherwise. For
all ${\bf c}$, $\sum_{\mathcal{V} \in E_{n-1}(q^m,n)} f(\mathcal{V},
{\bf c}) = {n-r \brack 1}$, and for all $\mathcal{V}$, $\sum_{{\bf
c} \in C} f(\mathcal{V},{\bf c}) = |
 C \cap \mathcal{V}|$. Summing over all possible pairs, we obtain
\begin{eqnarray*}
    \sum_{\mathcal{V} \in E_{n-1}(q^m,n)} \sum_{{\bf c} \in C} f(\mathcal{V},{\bf
    c}) &=& \sum_{{\bf c} \in C} \sum_{\mathcal{V} \in E_{n-1}(q^m,n)}  f(\mathcal{V},{\bf
    c})\\
    = \sum_{{\bf c} \in C} {n-r \brack 1} &=& {n-r \brack 1} \Ar(q^m,n,d,r).
\end{eqnarray*}
Hence there exists $\mathcal{U} \in E_{n-1}(q^m,n)$ such that $|C
\cap \mathcal{U}| = \sum_{{\bf c} \in C} f(\mathcal{U},{\bf c}) \geq
\frac{{n-r \brack 1}}{{n \brack 1}} \Ar(q^m,n,d,r)$. The restriction
of $C \cap \mathcal{U}$ to the ELS $\mathcal{U}$
\cite{gadouleau_it06} is an $(n-1,d,r)$ constant-rank code over
$\mathrm{GF}(q^m)$, and hence its cardinality satisfies
$\frac{q^{n-r} - 1}{q^n-1} \Ar(q^m,n,d,r) \leq |C \cap \mathcal{U}|
\leq \Ar(q^m,n-1,d,r)$.
\end{proof}

Eq.~(\ref{eq:bassalygo}) is the counterpart in rank metric codes of
the Bassalygo-Elias bound \cite{bassalygo_pit68},
while~(\ref{eq:johnson_r}) is analogous to a well-known result by
Johnson \cite{johnson_it62}. Note that~(\ref{eq:bassalygo}) can be
trivial for $d$ approaching $2r$.

\begin{proposition}\label{prop:A_r_r}
For all $q$ and $1 \leq r \leq n \leq m$,
\begin{equation}\label{eq:A_r_r}
    \Ar(q^m,n,r,r) = {n \brack r}(q^m-1).
\end{equation}
\end{proposition}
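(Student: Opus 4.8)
The plan is to prove the two inequalities $\Ar(q^m,n,r,r)\geq{n\brack r}(q^m-1)$ and $\Ar(q^m,n,r,r)\leq{n\brack r}(q^m-1)$ separately, the first by exhibiting an MRD-derived code and the second by a counting argument based on the elementary linear subspace structure.

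For the lower bound, I would specialize~(\ref{eq:A_lower_MRD}) of Proposition~\ref{prop:A_MRD} to the case $d=r$, obtaining $\Ar(q^m,n,r,r)\geq M_{r,r}$, and then evaluate $M_{r,r}$ directly from the definition~(\ref{eq:Mdr_def}). The sum over $j$ degenerates to its single term $j=r$, for which $(-1)^{r-j}{r\brack j}q^{{r-j \choose 2}}=1$ and $q^{m(j-r+1)}-1=q^m-1$, so $M_{r,r}={n\brack r}(q^m-1)$. Concretely this says the rank-$r$ codewords of an $(n,n-r+1,r)$ linear MRD code over $\mathrm{GF}(q^m)$ already form an $(n,r,r)$ constant-rank code of the desired size.

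For the upper bound, I would use the fact recalled in Section~\ref{sec:rank_metric} that every vector of rank $r$ in $\mathrm{GF}(q^m)^n$ lies in a unique ELS of dimension $r$. Hence an arbitrary $(n,r,r)$ constant-rank code $C$ is the disjoint union of the sets $C\cap\mathcal{V}$ as $\mathcal{V}$ ranges over the ${n\brack r}$ members of $E_r(q^m,n)$, so $|C|\leq{n\brack r}\max_{\mathcal{V}}|C\cap\mathcal{V}|$. For a fixed $\mathcal{V}$, the restriction of $C\cap\mathcal{V}$ to $\mathcal{V}$ (in the sense of \cite{gadouleau_it06}) is a code of length $r$ all of whose codewords have rank $r$ and all of whose pairwise rank distances are at least $r$; adjoining the zero vector keeps the minimum rank distance at least $r$ while strictly enlarging the code, so the Singleton bound for rank metric codes gives $r\leq r-\log_{q^m}(|C\cap\mathcal{V}|+1)+1$, i.e. $|C\cap\mathcal{V}|\leq q^m-1$. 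Combining these two inequalities yields $|C|\leq{n\brack r}(q^m-1)$. As an alternative route one could instead iterate the Johnson-type bound~(\ref{eq:johnson_r}) from $n$ down to $r$, using $\prod_{i=r+1}^{n}\frac{q^i-1}{q^{i-r}-1}={n\brack r}$ together with the base case $\Ar(q^m,r,r,r)\leq q^m-1$ supplied by~(\ref{eq:A_upper_MRD}).

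I do not expect a genuine obstacle here: the whole content is that the MRD construction is optimal in this regime, which holds precisely because the value $M_{r,r}$ coincides with the ELS-partition upper bound ${n\brack r}(q^m-1)$. The only step deserving a word of justification is the reduction to a length-$r$ problem inside each ELS, which rests on the metric-preserving restriction property of elementary linear subspaces from \cite{gadouleau_it06} that is already invoked in the proof of Proposition~\ref{prop:A_bounds}.
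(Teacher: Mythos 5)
Your proposal is correct. The lower bound is exactly the paper's: specialize~(\ref{eq:A_lower_MRD}) to $d=r$ and observe that the sum in~(\ref{eq:Mdr_def}) collapses to its $j=r$ term, giving $M_{r,r}={n\brack r}(q^m-1)$. For the upper bound you take a genuinely different primary route. The paper iterates the Johnson-type bound~(\ref{eq:johnson_r}) $n-r$ times to reduce to length $r$ and then invokes~(\ref{eq:A_upper_MRD}), i.e.\ $\Ar(q^m,r,r,r)\leq q^m-1$; you instead partition the code according to the unique dimension-$r$ ELS containing each rank-$r$ codeword, which immediately gives $|C|\leq{n\brack r}\max_{\mathcal{V}}|C\cap\mathcal{V}|$, and then bound each block by $q^m-1$ via the Singleton bound on the restriction to $\mathcal{V}$ with the zero vector adjoined (which is precisely how the paper proves~(\ref{eq:A_upper_MRD}) itself). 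Your partition argument is more self-contained and in a sense sharper in spirit: it avoids the averaging step underlying~(\ref{eq:johnson_r}) and shows the bound decomposes exactly ELS by ELS, at the cost of re-justifying the distance-preserving restriction property of elementary linear subspaces from \cite{gadouleau_it06}. The paper's route buys brevity by reusing Propositions~\ref{prop:A_MRD} and~\ref{prop:A_bounds} already in hand; your alternative remark (iterating~(\ref{eq:johnson_r}) with $\prod_{i=r+1}^{n}\frac{q^i-1}{q^{i-r}-1}={n\brack r}$) is exactly the paper's argument, so you have in effect supplied both proofs. No gaps.
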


\begin{proof}
First, by~(\ref{eq:A_lower_MRD}), we obtain $\Ar(q^m,n,r,r) \geq {n
\brack r}(q^m-1).$ Second, applying~(\ref{eq:johnson_r})
successively $n-r$ times leads to $\Ar(q^m,n,r,r) \leq {n \brack r}
\Ar(q^m,r,r,r).$ By~(\ref{eq:A_upper_MRD}), we obtain
$\Ar(q^m,n,r,r) \leq {n \brack r} (q^m-1)$.
\end{proof}

Equality in~(\ref{eq:A_r_r}) is thus achieved by the codewords of
rank $r$ in an $(n,n-r+1,r)$ linear MRD code.

\begin{proposition}\label{prop:A_upper_n_brack_r}
For all $q$ and $0 \leq r < d \leq n \leq m$,
\begin{equation}\label{eq:A_upper_n_brack_r}
    \Ar(q^m,n,d,r) \leq {n \brack r}.
\end{equation}
\end{proposition}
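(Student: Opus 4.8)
The plan is to show that in an $(n,d,r)$ constant-rank code $C$ with $r < d$, no two distinct codewords can lie in the same ELS of dimension $r$, and then invoke the count $|E_r(q^m,n)| = {n \brack r}$. First I would recall from Section~\ref{sec:rank_metric} that every vector ${\bf x} \in \mathrm{GF}(q^m)^n$ of rank $r$ belongs to a \emph{unique} ELS $\mathcal{E}({\bf x}) \in E_r(q^m,n)$; equivalently, $\mathcal{E}({\bf x})$ is the ELS whose generator matrix has the same row span as the expansion matrix ${\bf X}$, so $\mathcal{E}({\bf x}) = \mathfrak{T}({\bf x})$ in the notation of Section~\ref{sec:dimension_v_rank}. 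Thus the map ${\bf x} \mapsto \mathfrak{T}({\bf x})$ sends $C$ into $E_r(q^m,n)$, and it suffices to prove this map is injective on $C$.

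Next I would argue the injectivity. Suppose ${\bf c}, {\bf c}' \in C$ are distinct codewords with $\mathfrak{T}({\bf c}) = \mathfrak{T}({\bf c}') = \mathcal{V} \in E_r(q^m,n)$. Then both ${\bf c}$ and ${\bf c}'$ lie in the $r$-dimensional ELS $\mathcal{V}$, hence so does their difference ${\bf c} - {\bf c}'$, since $\mathcal{V}$ is a linear subspace. By the property recalled in Section~\ref{sec:rank_metric} that any vector belonging to an ELS of dimension $r$ has rank at most $r$, we get $\dr({\bf c}, {\bf c}') = \rk({\bf c} - {\bf c}') \leq r < d$, contradicting the assumption that $C$ has minimum rank distance $d$. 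Therefore $\mathfrak{T}$ restricted to $C$ is injective, so $|C| \leq |E_r(q^m,n)| = {n \brack r}$. Taking $C$ to be an optimal code gives $\Ar(q^m,n,d,r) \leq {n \brack r}$.

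I do not anticipate a genuine obstacle here; the only point requiring a little care is making explicit that the relevant "coordinate-like" object is the ELS $\mathfrak{T}({\bf x})$ (a subspace of $\mathrm{GF}(q^m)^n$), not the column-span subspace $\mathfrak{S}({\bf x})$ of $\mathrm{GF}(q)^m$, since it is the ELS structure that guarantees the difference of two codewords in it still has rank $\leq r$. One could alternatively phrase the same argument graph-theoretically: for each $\mathcal{V} \in E_r(q^m,n)$ the $\alpha(m,r)$ rank-$r$ vectors in $\mathcal{V}$ form a clique in $K_q(m,n,d,r)$ when $r < d$, so an independent set meets each such clique at most once, giving $\alpha(K_q(m,n,d,r)) \leq {n \brack r}$; but the direct argument via $\mathfrak{T}$ is cleaner and I would present that.
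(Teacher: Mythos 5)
Your argument is correct and is essentially the paper's own proof: the paper also uses the fact that two codewords lying in the same ELS of $E_r(q^m,n)$ would be at rank distance at most $r < d$, and then counts $|E_r(q^m,n)| = {n \brack r}$ (phrased as a pigeonhole argument rather than injectivity of $\mathfrak{T}$, which is the same thing). No gaps.
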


\begin{proof}
Consider a code $C$ with minimum rank distance $d$ and constant-rank
$r<d$. If $|C| > {n \brack r} = |E_r(q^m,n)|$, then there exist two
codewords in $C$ belonging to the same ELS $\mathcal{V} \in
E_r(q^m,n)$. Their distance is hence at most equal to $r$, which
contradicts the minimum distance of $C$. Therefore, $|C| \leq {n
\brack r}$.
\end{proof}

\begin{corollary}
For all $q$, $m$, and $n$, $\Ar(q^m,n,2,1) = {n \brack 1}$.
%\begin{equation}\label{eq:A_21}
%    \Ar(q^m,n,2,1) = {n \brack 1}.
%\end{equation}
\end{corollary}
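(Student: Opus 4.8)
The plan is to establish the two matching inequalities $\Ar(q^m,n,2,1)\le {n \brack 1}$ and $\Ar(q^m,n,2,1)\ge {n \brack 1}$, each by directly invoking a result proved earlier in the paper. Note that here $r=1<d=2=2r$, so $d=2$ is the only non-trivial minimum rank distance for constant-rank $1$, and it sits exactly at the boundary $d=2r$ where the Bassalygo--Elias-type bound~(\ref{eq:bassalygo}) degenerates, so a separate argument is needed.

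For the upper bound I would simply apply Proposition~\ref{prop:A_upper_n_brack_r} with $r=1$ and $d=2$: whenever $2\le n\le m$ it gives $\Ar(q^m,n,2,1)\le {n \brack 1}$, the point being that a set of more than $|E_1(q^m,n)|={n \brack 1}$ rank-one vectors must contain two codewords lying in a common one-dimensional ELS, whose rank distance is then at most $1$. The only value of $n$ not covered by that proposition is $n=1$ (where $d=2>n$), but there ${n \brack 1}=1$ and the bound is immediate.

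For the matching lower bound I would appeal to Corollary~\ref{cor:A>As}, equation~(\ref{eq:As_2r}): with $r=1$ it reads $\Ar(q^m,n,2,1)\ge\As(q,n,2,1)$ for $n\le m$, and together with the identity $\As(q,n,2,r)={n \brack r}$ recalled in Section~\ref{sec:preliminaries} this is precisely $\Ar(q^m,n,2,1)\ge {n \brack 1}$. Alternatively, and perhaps more transparently, one can display an optimal code outright: pick representatives $\mathbf{v}_1,\dots,\mathbf{v}_N$ of the $N={n \brack 1}$ one-dimensional subspaces of $\mathrm{GF}(q)^n$, and since $N={n \brack 1}\le {m \brack 1}$ also pick $\alpha_1,\dots,\alpha_N\in\mathrm{GF}(q^m)^{\ast}$ in pairwise distinct one-dimensional $\mathrm{GF}(q)$-subspaces of $\mathrm{GF}(q^m)$. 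Each $\alpha_i\mathbf{v}_i$ has rank $1$, and for $i\ne j$ the expansion of $\alpha_i\mathbf{v}_i-\alpha_j\mathbf{v}_j$ equals $\mathbf{a}_i\mathbf{v}_i^T-\mathbf{a}_j\mathbf{v}_j^T$ with $\{\mathbf{a}_i,\mathbf{a}_j\}$ and $\{\mathbf{v}_i,\mathbf{v}_j\}$ both $\mathrm{GF}(q)$-independent, hence of rank $2$; so $\{\alpha_i\mathbf{v}_i\}$ is an $(n,2,1)$ constant-rank code of cardinality ${n \brack 1}$, which moreover exhibits an optimal code.

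There is essentially no substantive obstacle, since the statement is a corollary of Proposition~\ref{prop:A_upper_n_brack_r} and Corollary~\ref{cor:A>As}. The only thing demanding a little care is bookkeeping on the range of $n$: Proposition~\ref{prop:A_upper_n_brack_r} requires $d\le n$, so $n=1$ must be disposed of separately (trivially), and one should check that~(\ref{eq:As_2r}) is used within its hypotheses ($n\le m$, the standing assumption, plus $2r\le n$ inherited from Proposition~\ref{prop:A>As}) — or avoid the issue entirely by using the explicit construction above for the lower bound.
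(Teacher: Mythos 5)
Your proposal is correct and follows essentially the same route as the paper: the upper bound via Proposition~\ref{prop:A_upper_n_brack_r}, and the lower bound via Corollary~\ref{cor:A>As} together with $\As(q,n,2,1)={n \brack 1}$ (which the paper verifies directly by noting that distinct elements of $E_1(q,n)$ intersect trivially). Your alternative explicit construction $\{\alpha_i\mathbf{v}_i\}$ with $\mathrm{GF}(q)$-independent scalars and subspace representatives is also valid, since $n\le m$ guarantees ${n \brack 1}\le{m \brack 1}$, and it has the small added merit of exhibiting an optimal code without passing through constant-dimension codes.
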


\begin{proof}
First, by Proposition~\ref{prop:A_upper_n_brack_r}, we obtain
$\Ar(q^m,n,2,1) \leq {n \brack 1}$. Second, by
Corollary~\ref{cor:A>As}, we obtain $\Ar(q^m,n,2,1) \geq
\As(q,n,2,1)$. We now prove that $\As(q,n,2,1) = {n \brack 1}$. For
any $\mathcal{U}, \mathcal{V} \in E_1(q,n)$, $\mathcal{U} \neq
\mathcal{V}$, we have $\dim(\mathcal{U} \cap \mathcal{V}) = 0$ and
hence $\ds(\mathcal{U}, \mathcal{V}) = 2$. Therefore, $E_1(q,n)$ is
a constant-dimension code with minimum subspace distance $2$ and
$\As(q,n,2,1) = {n \brack 1}$.
\end{proof}

\section{Asymptotic results}\label{sec:asymptotics}
In this section, we study the asymptotic behavior of
$\Ar(q^m,n,\dr,r)$. In order to compare it to the asymptotic
behavior of $\As(q,m,\ds,r)$, we use a set of normalized parameters
different from those introduced in \cite{koetter_arxiv07}: $\nu =
\frac{n}{m}$, $\rho = \frac{r}{m}$, $\deltar = \frac{\dr}{m}$, and
$\deltas = \frac{\ds}{2m}$. By definition, $0 \leq \rho, \deltar
\leq \nu$, and since we assume $n \leq m$, $\nu \leq 1$. We consider
the asymptotic rates defined as $\ar(\nu,\deltar,\rho) \df \lim_{m
\rightarrow \infty} \sup \left[\log_{q^{m^2}} \Ar(q^m,n,\dr,r)
\right]$ and $\as(\deltas,\rho) \df \lim_{m \rightarrow \infty} \sup
\left[ \log_{q^{m^2}} \As(q,m,\ds,r) \right].$

Adapting the results in \cite{silva_arxiv07} using the parameters
defined above, we obtain $\as(\deltas, \rho) =
\min\{(1-\rho)(\rho-\deltas),\rho(1-\rho-\deltas)\}$ for $0 \leq
\deltas \leq \min\{\rho,1-\rho\}$ and $\as(\deltas,\rho) = 0$
otherwise.

We now investigate how the $\Ar(q^m,n,d,r)$ term behaves as the
parameters tend to infinity. Without loss of generality, we only
consider the case where $0 \leq \deltar \leq 2\rho$, since
$\ar(\nu,\deltar,\rho) = 0$ for $\deltar
> 2\rho$.

\begin{proposition}\label{prop:a}
Suppose $\nu \leq 1$. For $0 \leq \deltar \leq \rho$,
\begin{equation}\label{eq:a1}
    \ar(\nu,\deltar,\rho) = \rho(1+\nu-\rho) - \deltar.
\end{equation}
For $\rho \leq \deltar \leq \min\{ 2\rho, \nu\}$,
\begin{equation}\label{eq:a2}
    \max\{0, \rho(1+\nu-\rho) - \deltar\}
    \leq \ar(\nu,\deltar,\rho) \leq
    \rho(\nu-\deltar).
\end{equation}

Suppose $\nu > 1$. For $0 \leq \deltar \leq \rho$,
$\ar(\nu,\deltar,\rho) = \rho(1+\nu-\rho) - \nu \deltar.$ For $\rho
\leq \deltar \leq \min\{ 2\rho, 1\}$, $\max\{0, \rho(1+\nu-\rho) -
\nu \deltar\}\leq \ar(\nu,\deltar,\rho) \leq\rho(1-\deltar)$.
\end{proposition}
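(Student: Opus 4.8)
\emph{Proof idea.} The plan is to prove the case $\nu\le 1$ by sandwiching $\log_{q^{m^2}}\Ar(q^m,n,\dr,r)$ between a lower bound and an upper bound that coincide when $0\le\deltar\le\rho$ and straddle the stated interval when $\rho\le\deltar\le\min\{2\rho,\nu\}$, and then to obtain the case $\nu>1$ from the first by transposition. For the reduction, recall $\Ar(q^m,n,d,r)=\Ar(q^n,m,d,r)$: when $\nu=n/m>1$, the right-hand side is the cardinality of a constant-rank code over $\mathrm{GF}(q^n)$ of length $m<n$, whose normalized parameters relative to $n$ are $\nu'=1/\nu\le 1$, $\rho'=\rho/\nu$, $\deltar'=\deltar/\nu$; since $\log_{q^{m^2}}X=\nu^2\log_{q^{n^2}}X$, we get $\ar(\nu,\deltar,\rho)=\nu^2\,\ar(1/\nu,\deltar/\nu,\rho/\nu)$. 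Because $\deltar\le\rho$ iff $\deltar'\le\rho'$ and $\min\{2\rho',\nu'\}=\frac{1}{\nu}\min\{2\rho,1\}$, substituting the $\nu\le 1$ formulas and multiplying by $\nu^2$ reproduces exactly the displayed $\nu>1$ expressions (e.g.\ $\nu^2[\rho'(1+\nu'-\rho')-\deltar']=\rho(1+\nu-\rho)-\nu\deltar$). So it suffices to treat $\nu\le 1$.

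For the lower bounds, which hold for all $\deltar$, I would apply the Bassalygo--Elias-type inequality~(\ref{eq:bassalygo}): $\Ar(q^m,n,d,r)\ge N_r(q^m,n)q^{m(1-d)}={n\brack r}\alpha(m,r)q^{m(1-d)}$. Using $\log_q{n\brack r}=r(n-r)+O(1)$ and $\log_q\alpha(m,r)=mr+O(1)$ --- the corrections are genuinely $O(1)$ because $r\le n\le m$ makes the relevant geometric sums bounded --- the logarithm of the right-hand side divided by $m^2$ tends to $\rho(\nu-\rho)+\rho-\deltar=\rho(1+\nu-\rho)-\deltar$. When this limit is negative the bound is vacuous, but $\Ar(q^m,n,d,r)\ge 1$ always gives $\ar\ge 0$; hence $\ar(\nu,\deltar,\rho)\ge\max\{0,\rho(1+\nu-\rho)-\deltar\}$, and for $\deltar\le\rho$ the maximum equals $\rho(1+\nu-\rho)-\deltar$ (it is then at least $\rho(\nu-\rho)\ge 0$). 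This yields the lower halves of~(\ref{eq:a1}) and~(\ref{eq:a2}).

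For the upper bounds I would iterate the Johnson-type inequality~(\ref{eq:johnson_r}), $\Ar(q^m,n,d,r)\le\Ar(q^m,n-1,d,r)\frac{q^n-1}{q^{n-r}-1}$, which is valid as long as $r$ is strictly below the current length. Iterating from length $n$ down to some $n_0\ge r$ and using the telescoping identity $\prod_{j=n_0+1}^n\frac{q^j-1}{q^{j-r}-1}={n\brack r}/{n_0\brack r}$ gives $\Ar(q^m,n,d,r)\le\Ar(q^m,n_0,d,r)\,{n\brack r}/{n_0\brack r}$. For $0\le\deltar\le\rho$ I would take $n_0=r$, so the factor is ${n\brack r}$ and $\Ar(q^m,r,d,r)\le q^{m(r-d+1)}-1$ by~(\ref{eq:A_upper_MRD}); the resulting exponent over $m^2$ tends to $\rho(\nu-\rho)+\rho-\deltar=\rho(1+\nu-\rho)-\deltar$, matching the lower bound and proving~(\ref{eq:a1}). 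For $\rho<\deltar\le\min\{2\rho,\nu\}$ I would take $n_0=d-1$, which is legitimate since $d-1\ge r$ for large $m$; here $\Ar(q^m,d-1,d,r)=1$ because any two distinct rank-$r$ vectors of length $d-1$ differ by a vector of rank at most $d-1<d$. Thus $\Ar(q^m,n,d,r)\le{n\brack r}/{d-1\brack r}$, whose logarithm over $m^2$ tends to $\rho(\nu-\deltar)$, the claimed upper bound in~(\ref{eq:a2}); the endpoint $\deltar=\rho$ is already covered by~(\ref{eq:a1}), both formulas giving $\rho(\nu-\rho)$ there.

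The step I expect to be most delicate is the bookkeeping of the lower-order terms in the asymptotic estimates of ${n\brack r}$, $\alpha(m,r)$, and the telescoping products: one must check that, since all parameters scale linearly in $m$ and $n\le m$, every such correction is $o(m^2)$ and therefore disappears after normalization. A secondary point needing care is the behaviour near $\deltar=\rho$ and under the integer rounding of $n,d,r$ along a sequence (whether $d\le r$ or $d>r$); since the two regime-formulas agree at $\deltar=\rho$ and any applicable instance of the bounds above is enough, this causes no genuine difficulty.
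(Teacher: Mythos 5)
Your proposal is correct and follows essentially the same route as the paper: the lower bound comes from the Bassalygo--Elias-type bound~(\ref{eq:bassalygo}) together with the standard asymptotics of ${n \brack r}$ and $\alpha(m,r)$, and the upper bounds come from iterating the Johnson-type bound~(\ref{eq:johnson_r}) and terminating with a Singleton-type estimate. The only (harmless) deviations are that in the regime $\rho<\deltar$ you stop the iteration at length $d-1$ and use the trivial fact $\Ar(q^m,d-1,d,r)=1$, whereas the paper stops at length $\dr$ and bounds $\Ar(q^m,\dr,\dr,r)$ by roughly $q^m$, and that you make the $\nu>1$ case explicit via transposition and the rescaling identity $\ar(\nu,\deltar,\rho)=\nu^2\,\ar(1/\nu,\deltar/\nu,\rho/\nu)$, which the paper leaves as ``similar and omitted.''
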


\begin{proof}
We give the proof for $\nu \leq 1$, and the proof for $\nu > 1$ is
similar and hence omitted. We first derive a lower bound on
$\ar(\nu,\deltar,\rho)$ for all $\rho$. Using the combinatorial
bounds in \cite{gadouleau_it06},~(\ref{eq:bassalygo}) yields
$\Ar(q^m,n,\dr,r)
> q^{r(m+n-r) - \sigma(q) + m(-\dr+1)}$, where $\sigma(q) < 2$ for
$q \geq 2$. This asymptotically becomes $\ar(\nu, \deltar, \rho)
\geq \rho(1+\nu-\rho) - \deltar$ for $0 \leq \deltar \leq \min\{
2\rho, \nu \}$.

We now derive an upper bound on $\ar(\nu,\deltar,\rho)$. First,
suppose $r \geq \dr$. Applying~(\ref{eq:johnson_r}), we easily
obtain $\Ar(q^m,n,\dr,r) \leq {n \brack r} \Ar(q^m,r,\dr,r).$
Combining with~(\ref{eq:A_upper_MRD}), we obtain $\Ar(q^m,n,\dr,r)
\leq {n \brack r} q^{m(r-\dr+1)} < q^{r(n-r) + \sigma(q) +
m(r-\dr+1)}.$ Asymptotically, this becomes $\ar(\nu,\deltar,\rho)
\leq \rho(\nu-\rho) - \deltar + \rho$ for $\rho \geq \deltar$.
Second, suppose $r < \dr$. By the same token, we obtain
$\Ar(q^m,n,\dr,r) \leq \frac{{n \brack r}}{{\dr \brack r}}
\Ar(q^m,\dr,\dr,r) \leq q^{r(n-\dr) +\sigma(q) + m}$, and hence
$\ar(\nu, \deltar, \rho) \leq \rho(\nu-\deltar)$ for $\rho \leq
\deltar$.
%Third, denote $t \df
%\left\lfloor \frac{d-1}{2} \right\rfloor$, and suppose $t < r < d$
%and $n-r+t \geq d$. Applying~(\ref{eq:johnson_r-1}), we easily
%obtain
%$$
%    \Ar(q^m,n,d,r) \leq \frac{N_r(q^m,n)}{N_t(q^{m-r+t}, n-r+t)}
%    \Ar(q^{m-r+t}, n-r+t, d, t)
%    \leq q^{r(m+n-r) + 2\sigma(q) - t(m+n-2r+t)}.
%$$
%Asymptotically, this
%becomes $a(\delta,\rho) \leq \rho(1+b-b\rho) -
%\delta/2(1+b-2b\rho+b\delta/2)$ for $\delta/2 \leq \rho \leq \min\{
%\delta, 1-\delta/2 \}$.
\end{proof}

%We observe that the asymptotic behavior of the maximal cardinality
%of constant-rank codes seems different from that of
%constant-dimension codes. The behavior of $\as$ depends on whether
%$\rho \geq 1/2$, similarly, the behavior of $\ar$ depends on whether
%$\nu \leq 1$.

We observe that the asymptotic behavior of the maximal cardinality
of constant-dimension codes depends on whether $\rho =
\frac{r}{m}\leq \frac{1}{2}$, while the asymptotic behavior of the
maximal cardinality of constant-rank codes depends on whether $\nu =
\frac{n}{m}\leq 1$. This is due to the different behaviors of rank
metric codes of length $n$ over $\mathrm{GF}(q^m)$ for $m \geq n$
and $m<n$ respectively. The construction of an asymptotically
optimal constant-dimension code in $E_r(q,m)$ given in
\cite{koetter_arxiv07} and reviewed in
Section~\ref{sec:dimension_v_rank} is based on a rank metric code of
length $m-r$ over $\mathrm{GF}(q^r)$. Hence $r \geq m-r$ for the
rank metric code is equivalent to $r \geq m/2$ (or $\rho \geq 1/2$)
for the constant-dimension code.

By the Singleton bound on rank metric codes, the asymptotic behavior
of the cardinality of an $(n,n-\dr+1,\dr)$ linear MRD code over
$\mathrm{GF}(q^m)$ with $\nu \leq 1$ is given by $\nu-\dr$. However,
by~(\ref{eq:a1}), $\ar(\nu, \deltar, \nu) = \nu-\dr$ for $\nu \leq
1$ and hence the maximum cardinality of a constant-rank code with
rank $n$ is asymptotically equivalent to the cardinality of an MRD
code with the same minimum rank distance.
%
%By~(\ref{eq:a1}), $\ar(\nu, \deltar, \nu) = \nu-\dr$ for $\nu \leq
%1$, which by~(\ref{eq:singleton1}) is the asymptotic rate of an
%$(n,n-\dr+1,\dr)$ linear MRD code.
We hence conjecture that the code
formed by the codewords of rank $n$ in an $(n,n-\dr+1,\dr)$ linear
MRD code achieves the maximal cardinality asymptotically.

\bibliographystyle{IEEETran}
\bibliography{gpt,network_coding}

\begin{thebibliography}{10}
\providecommand{\url}[1]{#1}
\csname url@rmstyle\endcsname
\providecommand{\newblock}{\relax}
\providecommand{\bibinfo}[2]{#2}
\providecommand\BIBentrySTDinterwordspacing{\spaceskip=0pt\relax}
\providecommand\BIBentryALTinterwordstretchfactor{4}
\providecommand\BIBentryALTinterwordspacing{\spaceskip=\fontdimen2\font plus
\BIBentryALTinterwordstretchfactor\fontdimen3\font minus
  \fontdimen4\font\relax}
\providecommand\BIBforeignlanguage[2]{{%
\expandafter\ifx\csname l@#1\endcsname\relax
\typeout{** WARNING: IEEEtran.bst: No hyphenation pattern has been}%
\typeout{** loaded for the language `#1'. Using the pattern for}%
\typeout{** the default language instead.}%
\else
\language=\csname l@#1\endcsname
\fi
#2}}

\bibitem{HO_IT06}
T.~Ho, M.~M\'edard, R.~Koetter, D.~R. Karger, M.~Effros, J.~Shi, and B.~Leong,
  ``A random linear network coding approach to multicast,'' \emph{IEEE Trans.
  Info. Theory}, vol.~52, no.~10, pp. 4413--4430, October 2006.

\bibitem{yeung_cis06}
R.~W. Yeung and N.~Cai, ``Network error correction, part {I}: Basic concepts
  and upper bounds,'' \emph{Commun. Inform. Syst.}, vol.~6, no.~1, pp. 19--36,
  2006.

\bibitem{cai_cis06}
N.~Cai and R.~W. Yeung, ``Network error correction, part {II}: Lower bounds,''
  \emph{Commun. Inform. Syst.}, vol.~6, no.~1, pp. 37--54, 2006.

\bibitem{koetter_arxiv07}
R.~Koetter and F.~R. Kschischang, ``Coding for errors and erasures in random
  network coding,'' \emph{submitted to IEEE Trans. Info. Theory}, available at
  \url{http://arxiv.org/abs/cs/0703061}.

\bibitem{silva_arxiv07}
D.~Silva, F.~R. Kschischang, and R.~Koetter, ``A rank-metric approach to error
  control in random network coding,'' \emph{submitted to IEEE Trans. Info.
  Theory}, available at \url{http://arxiv.org/abs/0711.0708}.

\bibitem{chihara_siam87}
L.~Chihara, ``On the zeros of the {A}skey-{W}ilson polynomials with
  applications to coding theory,'' \emph{SIAM J. Math. Anal.}, vol.~8, pp.
  191--207, 1987.

\bibitem{delsarte_jct78}
P.~Delsarte, ``Bilinear forms over a finite field, with applications to coding
  theory,'' \emph{Journal of Combinatorial Theory A}, vol.~25, pp. 226--241,
  1978.

\bibitem{gabidulin_pit0185}
E.~M. Gabidulin, ``Theory of codes with maximum rank distance,'' \emph{Problems
  on Information Transmission}, vol.~21, no.~1, pp. 1--12, Jan. 1985.

\bibitem{roth_it91}
R.~M. Roth, ``Maximum-rank array codes and their application to crisscross
  error correction,'' \emph{IEEE Trans. Info. Theory}, vol.~37, no.~2, pp.
  328--336, March 1991.

\bibitem{gabidulin_lncs91}
E.~M. Gabidulin, A.~V. Paramonov, and O.~V. Tretjakov, ``Ideals over a
  non-commutative ring and their application in cryptology,'' \emph{LNCS}, vol.
  573, pp. 482--489, 1991.

\bibitem{lusina_it03}
P.~Lusina, E.~M. Gabidulin, and M.~Bossert, ``Maximum rank distance codes as
  space-time codes,'' \emph{IEEE Trans. Info. Theory}, vol.~49, pp. 2757--2760,
  Oct. 2003.

\bibitem{agrell_it00}
E.~Agrell, A.~Vardy, and K.~Zeger, ``Upper bounds for constant-weight codes,''
  \emph{IEEE Trans. Info. Theory}, vol.~46, no.~7, pp. 2373--2395, Nov. 2000.

\bibitem{andrews_book76}
G.~E. Andrews, \emph{The Theory of Partitions}, ser. Encyclopedia of
  Mathematics and its Applications, G.-C. Rota, Ed.\hskip 1em plus 0.5em minus
  0.4em\relax Reading, MA: Addison-Wesley, 1976, vol.~2.

\bibitem{gadouleau_it06}
M.~Gadouleau and Z.~Yan, ``On the decoder error probability of bounded
  rank-distance decoders for maximum rank distance codes,'' \emph{to appear in
  IEEE Trans. Info. Theory}, available at
  \url{http://arxiv.org/pdf/cs.IT/0612051}.

\bibitem{xia_arxiv07}
S.-T. Xia and F.-W. Fu, ``Johnson type bounds on constant dimension codes,''
  \emph{submitted to Designs, Codes and Cryptography}, Sept. 2007, available at
  \url{http://arxiv.org/abs/0709.1074}.

\bibitem{wang_it03}
H.~Wang, C.~Xing, and R.~Safani-Naini, ``Linear authentication codes: Bounds
  and constructions,'' \emph{IEEE Trans. Info. Theory}, vol.~49, no.~4, pp.
  866--872, April 2003.

\bibitem{el_rouayheb_isit07}
S.~Y. {El Rouayheb}, C.~N. Georghiades, E.~Soljanin, and A.~Sprintson, ``Bounds
  on codes based on graph theory,'' \emph{Proc. IEEE Int. Symp. on Information
  Theory}, pp. 1876--1879, June 2007.

\bibitem{brouwer_book89}
A.~E. Brouwer, A.~M. Cohen, and A.~Neumaier, \emph{Distance-Regular Graphs},
  ser. A Series of Modern Surveys in Mathematics.\hskip 1em plus 0.5em minus
  0.4em\relax Springer-Verlag, 1989, vol.~18, no.~3.

\bibitem{bassalygo_pit68}
L.~A. Bassalygo, ``New upper bounds for error correcting codes,''
  \emph{Problems of Information Transmission}, vol.~1, no.~4, pp. 32--35, 1968.

\bibitem{johnson_it62}
S.~M. Johnson, ``A new upper bound for error-correcting codes,'' \emph{IRE
  Trans. Info. Theory}, vol.~8, pp. 203--207, 1962.

\end{thebibliography}

\end{document}